\long\def\ca#1\cb{} 
\newcommand{\braket}[2]{\langle #1 \hspace{1pt} | \hspace{1pt} #2 \rangle}
\newcommand{\ketbra}[2]{| \hspace{1pt} #1 \rangle \langle #2 \hspace{1pt} |}
\newcommand{\bramatket}[3]{\langle #1 \hspace{1pt} | #2 | \hspace{1pt} #3 \rangle}
\newcommand{\norm}[2][]{#1| \! #1| #2 #1| \! #1|}
\newcommand{\ket}[1]{|#1\rangle}               
\newcommand{\bra}[1]{\langle #1|}              
\newcommand{\dya}[1]{\ket{#1}\!\bra{#1}}
\newcommand{\poly}{\operatorname{poly}}
\newcommand{\CC}{\mathcal{C}}
\newcommand{\HC}{\mathcal{H}}
\newcommand{\OC}{\mathcal{O}}
\newcommand{\PC}{\mathcal{P}}
\newcommand{\SC}{\mathcal{S}}
\newcommand{\ZC}{\mathcal{Z}}
\newcommand{\Tr}{{\rm Tr}}
\renewcommand{\geq}{\geqslant}
\renewcommand{\leq}{\leqslant}
\renewcommand{\vec}[1]{\boldsymbol{#1}}  
\newcommand*{\id}{\openone}
\newtheorem{theorem}{Theorem}
\newtheorem{lemma}{Lemma}
\newtheorem{proposition}{Proposition}
\newtheorem{definition}{Definition}
\begin{document}
\title{Computable and operationally meaningful multipartite entanglement measures}

\author{Jacob L. Beckey}
\affiliation{Theoretical Division, Los Alamos National Laboratory, Los Alamos, New Mexico 87545, USA}
\affiliation{JILA, NIST and University of Colorado, Boulder, Colorado 80309, USA}
\affiliation{Department of Physics, University of Colorado, Boulder, Colorado 80309, USA}
\affiliation{Quantum Science Center, Oak Ridge, TN 37931, USA}

\author{N. Gigena}
\affiliation{Faculty of Physics, University of Warsaw, Pasteura 5, 02-093 Warsaw, Poland}

\author{Patrick J. Coles}
\affiliation{Theoretical Division, Los Alamos National Laboratory, Los Alamos, New Mexico 87545, USA}
\affiliation{Quantum Science Center, Oak Ridge, TN 37931, USA}

\author{M. Cerezo}
\affiliation{Theoretical Division, Los Alamos National Laboratory, Los Alamos, New Mexico 87545, USA}
\affiliation{Quantum Science Center, Oak Ridge, TN 37931, USA}
\affiliation{Center for Nonlinear Studies, Los Alamos National Laboratory, Los Alamos, New Mexico 87545, USA}

\begin{abstract} 
Multipartite entanglement is an essential resource for quantum communication, quantum computing, quantum sensing, and quantum networks. The utility of a quantum state, $\ket{\psi}$, for these applications is often directly related to the degree or type of entanglement present in $\ket{\psi}$. Therefore, efficiently quantifying and characterizing multipartite entanglement is of paramount importance. In this work, we introduce a family of multipartite entanglement measures, called Concentratable Entanglements. Several well-known entanglement measures are recovered as special cases of our family of measures, and hence we provide a general framework for quantifying multipartite entanglement. We prove that the entire family does not increase, on average, under Local Operations and Classical Communications. We also provide an operational meaning for these measures in terms of probabilistic concentration of entanglement into Bell pairs. Finally, we show that these quantities can be efficiently estimated on a quantum computer by implementing a parallelized SWAP test, opening up a research direction for measuring multipartite entanglement on quantum devices. 

\end{abstract}
\maketitle

\textit{Introduction.} The presence of entanglement in quantum states is widely recognized as one of, if not the, defining property of quantum mechanics~\cite{einstein1935can}. Since the development of quantum information theory~\cite{wilde2013quantum,plenio1998entanglement} it was realized that entanglement is a fundamental resource~\cite{horodecki2009quantum,gigena2020one} for quantum communications~\cite{bennett1993teleporting,barrett2002nonsequential,cleve1997substituting,gigena2017bipartite}, quantum cryptography~\cite{ekert1991quantum,gisin2002quantum}, and quantum computing~\cite{ekert1998quantum,nielsen2000quantum,datta2005entanglement}. Recent advances in quantum control technologies have made it possible to harness the power of entanglement for  quantum-enhanced sensing~\cite{chalopin2018quantum,pooser2020truncated,pezze2021entanglement} and communications~\cite{liao2017satellite,ma2012quantum,valivarthi2020teleportation}, and for showing quantum advantage using near-term quantum computers~\cite{google2019supremacy}. While the ubiquity of entangled quantum states as a resource is clear, their utility  for these applications often depends on the degree of entanglement in the quantum state. 

The nature of quantum entanglement is well understood for bipartite pure quantum states~\cite{zyczkowski2006geometry,nielsen2000quantum,wilde2013quantum}. However, the same cannot be said for the multipartite entanglement of pure states~\cite{walter2016multipartite}, where the complexity of entanglement scales exponentially with the number of parties. In fact, already for a system of three qubits there  exists two different, and  inequivalent, types of genuine tripartite entanglement, such that states of the two different kinds cannot be exactly  transformed onto the other via the action of Local Operations and Classical Communications (LOCC)~\cite{dur2000three}.  While the study of multipartite entanglement has received considerable attention~\cite{coffman2000distributed,acin2000generalized,barnum2001monotones,wong2001potential,meyer2002global,miyake2003classification,toth2005detecting,walter2013entanglement,sawicki2012critical} there does not exist a single unambiguous way to detect, quantify and characterize multipartite entanglement. Hence, improving our knowledge on the nature of the entanglement  between multiple parties is not only crucial to better understanding the underlying structure of quantum mechanics, but it is also a fundamental step towards enhancing emergent technologies such as distributed quantum sensing~\cite{guo2020distributed}, longer baseline telescopes~\cite{gottesman2012longer}, and various quantum internet applications~\cite{kimble2008quantum,navascues2020genuine,wehner2018quantum}.

The advent of quantum computing technologies brings forth the possibility of verifying and characterizing the multipartite entanglement present in states prepared on these near-term quantum devices. In this context, entanglement measures that are not only theoretically relevant, but that can also be estimated via  quantum algorithms~\cite{cerezo2020variationalreview,toth2005detecting,alves2004multipartite,smith2017quantifying,foulds2020controlled,hayden2013two,gutoski2013quantum}, become particularly attractive as characterization tools. For instance,  it was shown~\cite{brennen2003observable,meyer2002global} that given an $n$-qubit state $\ket{\psi}$, the linear entropies $\frac{1}{2}\left(1-\Tr\rho_j^2\right)$ of the single-qubit reduced states $\rho_j$ can be used to study the entanglement in $\ket{\psi}$. Moreover, since the SWAP test~\cite{buhrman2001quantum,harrow2013testing,gutoski2015quantum,cincio2018learning,subacsi2019entanglement} can be used to compute linear entropies, these measures can be efficiently estimated on quantum computers or optical quantum devices.

In this work, we introduce a family of quantities, called \textit{Concentratable Entanglements}, which characterize and quantify the multipartite entanglement in an arbitrary $n$-qubit pure state $\ket{\psi}$. We first prove that each Concentratable Entanglement does not increase, on average, under LOCC operations, and hence forms an entanglement monotone. Then, we show that by combining Concentratable Entanglements one can obtain several quantities of interest, which can quantify properties such as the entanglement in, and between, subsystems, as well as the total entanglement in $\ket{\psi}$. We then discuss how these quantities can be efficiently estimated on a quantum computer given two copies of $\ket{\psi}$,  and employing constant-depth $n$-qubit parallelized SWAP tests. Finally, we discuss the operational meaning of the Concentratable Entanglement as the probability of obtaining Bell pairs between qubits in the different copies of~$\ket{\psi}$.

Our results generalize previous results in the literature in the sense that: (1) several entanglement measures correspond to a special case of the Concentratable Entanglements~\cite{brennen2003observable,meyer2002global,walter2013entanglement,wong2001potential,carvalho2004decoherence}, (2) we prove a conjecture in Ref.~\cite{foulds2020controlled}, where it was hypothesized that the parallelized SWAP test can provide the basis for constructing a pure state  multipartite entanglement monotone. Finally, the broader implication of our work is to promote a research direction of studying multipartite entanglement using quantum devices, such as cloud-based quantum computers.

\bigskip

\textit{Concentratable Entanglement.} Consider an $n$-qubit pure quantum state  $\ket{\psi}$. We denote $\SC=\{1, 2, \ldots, n\}$ as the set of labels for each qubit, and $\PC(\SC)$ as its power set (i.e., the set of subsets, with cardinality $|\SC| =2^n$). We introduce the \textit{Concentratable Entanglements} as a family of entanglement monotones that characterize and quantify the multipartite entanglement in  $\ket{\psi}$. 
\begin{definition} \label{def:C} 
For any set of qubit labels $s\in \PC(\SC)\setminus \{\emptyset\}$, the Concentratable Entanglement is defined as
\begin{equation}
\CC_{\ket{\psi}}(s)=1-\frac{1}{2^{c(s)}}\sum_{\alpha\in \PC(s)} {\rm Tr} \rho_\alpha^2 \label{eq:concentratable}\,,
\end{equation}
where $c(s)$ is the cardinality of the set  $s$, and $\PC(s)$ its power set. Here we denote by $\rho_\alpha$ the joint reduced state, associated to $\ket{\psi}$, of the subsystems labeled by the elements in $\alpha$ (with $\alpha=\emptyset$ leading to $\rho_\alpha := 1$).
\end{definition}

Equation~\eqref{eq:concentratable} shows that each $\CC_{\ket{\psi}}(s)$ is the average of the entanglement between the subsets of qubits  with labels in $s$ and the rest of the system. This means that different Concentratable Entanglements can measure both bipartite and multipartite entanglements according to how $s$ is defined. For instance taking the smallest set possible, i.e., $s=\{j\}$ with $j=1,\ldots,n$, one finds $\CC_{\ket{\psi}}(\{j\})= \frac{1}{2}\left(1-\Tr\rho_j^2\right)$. Thus, when averaged over $\{j\}$, one recovers the measures in~\cite{brennen2003observable,meyer2002global} which quantify the bipartite entanglement between the $j$-th qubit and the rest. On the other hand, taking the largest set possible, i.e., $s=\SC$, $\CC_{\ket{\psi}}(\SC)$ quantifies the overall entanglement in $\ket{\psi}$ across all cuts, and as discussed below, this case corresponds to the entanglement measure conjectured in~\cite{foulds2020controlled}. Moreover, in this case we also recover the entanglement measure of~\cite{carvalho2004decoherence} as a special case of the Concentratable Entanglements. Including these extremal cases, there are a total of $2^n-1$ Concentratable Entanglements according to Definition~\ref{def:C}.

\begin{figure}[t!]
\centering
\includegraphics[width=1\columnwidth]{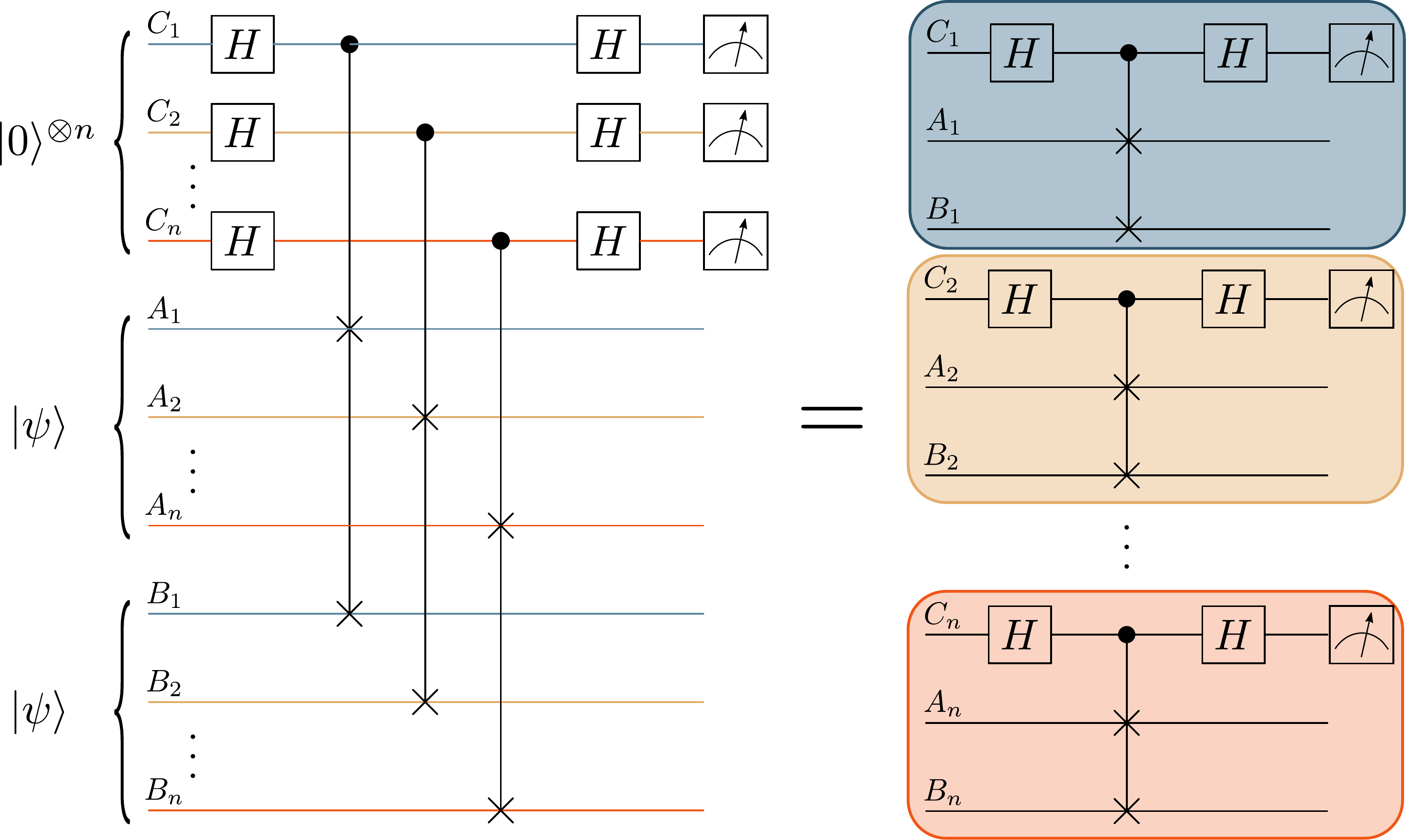}
\caption{\textbf{Circuit for the $n$-qubit parallelized SWAP test.} Given two copies of the quantum state $\ket{\psi}$,and $n$ ancilla qubits, the $n$-qubit parallelized SWAP test consists of employing  the $k$-th ancilla to perform a controlled swap test on the $k$-th qubit of each copy of $\ket{\psi}$. Since  the $n$  SWAP test can be factorized, one can perform them in parallel, leading to a constant depth circuit. }
\label{fig:SWAPn}
\end{figure}

\bigskip

\textit{Efficient Computation.} A fundamental aspect of the Concentratable Entanglements is that they can be efficiently estimated on a quantum computer. While each purity, $\Tr[\rho_{\alpha}^2]$, in~\eqref{eq:concentratable} can be computed via an overlap test~\cite{cincio2018learning}, one can also use two copies of the state $\ket{\psi}$ and $n$ ancilla qubits to  employ the $n$-qubit parallelized SWAP test depicted in Fig.~\ref{fig:SWAPn} (see Supplementary Information for a discussion on the SWAP test). From Fig.~\ref{fig:SWAPn}, it is clear that the $k$-th ancilla qubit is used to perform a controlled SWAP test on the $k$-th qubit of each copy of $\ket{\psi}$. The tests are independent and thus factorizable. This implies that the $n$-qubit parallelized SWAP test has a constant circuit depth for any number of qubits. 

Given the $n$-qubit parallelized SWAP test, we define the following relevant quantities. First, let $p(\vec{z})$ be the probability of measuring the $\vec{z}$ bitstring on the $n$ control qubits, and let $\ZC=\{0,1\}^n$ be  the set of all such bitstrings. Then, the following proposition (proved in the Supplementary Information~
\footnote{See Supplementary Information, which contains Refs.~\cite{nielsen2001majorization,nielsen1999conditions,vidal1999entanglement,bruss2019quantum,schatzki2021entangled,coles2019strong}
}) holds. 
\begin{proposition}\label{prop:CE-probab}
The Concentratable Entanglement can be computed from the outcomes of the $n$-qubit parallelized SWAP test as 
\begin{align}
\CC_{\ket{\psi}}(s)=1-\sum_{\vec{z}\in \ZC_0(s)}p(\vec{z})
\label{eq:concentratable2}\,,
\end{align}
where $\ZC_0(s)$ is the set of all bitstrings with $0$'s on all indices in $s$. 
\end{proposition}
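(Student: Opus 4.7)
The plan is to compute the joint outcome distribution $p(\vec z)$ of the $n$-qubit parallelized SWAP test in closed form in terms of the purities $\Tr[\rho_\alpha^2]$, and then restrict the sum to $\vec z \in \ZC_0(s)$ to recover Eq.~\eqref{eq:concentratable}.

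First, I would factorize the measurement. On each ancilla $k$, the sequence Hadamard--controlled-SWAP--Hadamard--measurement realizes, after discarding the ancilla, a two-outcome POVM $\{\tfrac{1}{2}(I + W_k),\,\tfrac{1}{2}(I - W_k)\}$ on the $k$-th qubit of each copy of $\ket{\psi}$, where $W_k$ denotes that pairwise SWAP. Since the $n$ controlled-SWAPs act on disjoint qubit pairs and have independent ancillas, the joint POVM element for outcome $\vec z \in \{0,1\}^n$ factorizes:
\begin{equation}
M_{\vec z} \;=\; \bigotimes_{k=1}^{n} \tfrac{1}{2}\bigl(I + (-1)^{z_k} W_k\bigr).
\end{equation}

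Second, I would expand this tensor product and invoke the SWAP-to-purity identity. Distributing the product gives $M_{\vec z} = 2^{-n}\sum_{\alpha \subseteq \SC}(-1)^{\sum_{k\in\alpha} z_k}\, W_\alpha$, with $W_\alpha := \prod_{k\in\alpha} W_k$. Using $\Tr[W_\alpha\,(\ket{\psi}\bra{\psi})^{\ot 2}] = \Tr[\rho_\alpha^2]$ (which follows by tracing out the qubits outside $\alpha$ and then applying the single-register identity $\Tr[\mathrm{SWAP}\,(\sigma\ot\sigma)] = \Tr[\sigma^2]$, together with the convention $\rho_\emptyset := 1$) yields
\begin{equation}
p(\vec z) \;=\; \frac{1}{2^n}\sum_{\alpha \subseteq \SC} (-1)^{\sum_{k\in\alpha} z_k}\,\Tr[\rho_\alpha^2].
\end{equation}

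Finally, I would carry out the restricted sum. Since $\ZC_0(s)$ fixes $z_k=0$ for $k\in s$ and lets $z_k$ range freely over $\{0,1\}$ for $k\notin s$, interchanging the order of summation reduces the calculation to the bitwise character sum $\sum_{\vec z\in\ZC_0(s)}(-1)^{\sum_{k\in\alpha} z_k}$. Applying $\sum_{y\in\{0,1\}}(-1)^{yb} = 2\,\delta_{b,0}$ to each free index shows that this sum vanishes unless $\alpha\cap(\SC\setminus s) = \emptyset$, i.e.\ unless $\alpha\in\PC(s)$, in which case it equals $2^{n-c(s)}$. What remains is $2^{-c(s)}\sum_{\alpha\in\PC(s)}\Tr[\rho_\alpha^2]$, so $1 - \sum_{\vec z\in\ZC_0(s)} p(\vec z) = \CC_{\ket{\psi}}(s)$ by Definition~\ref{def:C}. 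There is no deep obstacle in this argument; the main subtlety is keeping the sign and subset conventions straight, ensuring that $z_k = 0$ really corresponds to the $+W_k$ branch and that character orthogonality on the free bits leaves precisely the subsets $\alpha \in \PC(s)$ rather than, say, $\alpha\subseteq \SC\setminus s$. Once those conventions are pinned down, the proof is a routine expansion.
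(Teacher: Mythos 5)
Your proposal is correct and follows essentially the same route as the paper: both derive the factorized POVM element $\frac{1}{2^n}\prod_k(\id_k+(-1)^{z_k}S_k)$ for the parallelized SWAP test, apply the SWAP-to-purity identity, and then perform the restricted sum over $\ZC_0(s)$. The only (cosmetic) difference is ordering — the paper first sums over the free bits $z_k$, $k\notin s$ (interpreting this as a partial trace down to $\rho_s\otimes\rho_s$) and then expands $\prod_{k\in s}(\id_k+S_k)$, whereas you expand over all $\alpha\subseteq\SC$ first and kill the unwanted terms by character orthogonality; both yield the same result.
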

Proposition~\ref{prop:CE-probab} shows that $\CC_{\ket{\psi}}(s)$ can be computed by performing the parallelized SWAP test on all qubits and adding the probabilities where the control qubits with indices in $s$ are measured in the $\ket{0}$ state. Since this corresponds to a conditional probability, one can also perform SWAP tests only on the qubits with indexes in $s$ (requiring just $c(s)$ ancillary qubits) and express the Concentratable Entanglement as $\CC_{\ket{\psi}}(s)=1-p(\vec{0}_x)$. Here $p(\vec{0}_x)=\sum_{\vec{z}\in \ZC_0(s)}p(\vec{z})$ denotes the probability of obtaining the all-zero result from the SWAP test on the qubits with labels in $s$. 

Here we remark that Eqs.~\eqref{eq:concentratable} and~\eqref{eq:concentratable2} are complimentary in the sense that the number of terms in the summations are inversely proportional. That is,  the summation in Eq.~\eqref{eq:concentratable} contains $2^{c(s)}$ terms, while that of~\eqref{eq:concentratable2} contains   $2^{n-c(s)}$ terms. Hence, we remark that it is preferable to employ Eq.~\eqref{eq:concentratable2} when analyzing multipartite entanglement, as this avoids potentially having to compute a prohibitively large number of purities as those required in other entanglement measures~\cite{scott2004multipartite}. For instance, if $(n-c(s)) \in\OC(\log(n))$ then~\eqref{eq:concentratable2} only contains number of terms in $\OC(\poly(n))$. For the purpose of analyzing multipartite entanglement we henceforth focus on Eq.~\eqref{eq:concentratable2}.

Finally, we remark, that, as shown in the Supplementary Information, the SWAP test still works when the two copies of $\ket{\psi}$ are not exactly the same. Specifically, let $\ket{\psi}$ and $\ket{\psi'}$ be two faulty copies of the state with  $\Vert \dya{\psi}-\dya{\psi'}\Vert_1 \leq \varepsilon$, then, we find that the error in the Concentratable Entanglement is upper-bounded by $\OC(\varepsilon^2)$, indicating that small errors in the state lead to small Concentratable Entanglement difference.

\bigskip

\textit{Properties of $\CC_{\ket{\psi}}(s)$.} We now present our main results which provide properties and additional insight for the Concentratable Entanglements. The proofs of these results are provided in the Supplementary Information.
\begin{theorem}\label{theo1}
The Concentratable Entanglement has the following properties:
\begin{enumerate}
    \item $\CC_{\ket{\psi}}(s)$ is non-increasing, on average, under LOCC operations and hence is a well-defined pure state entanglement measure. 
    \item If $\ket{\psi}$ is a separable state of the form $\ket{\psi}=\bigotimes_{j=1}^n \ket{\phi_j}$, then  $\CC_{\ket{\psi}}(s)=0$ for all $s\in\PC(\SC)\setminus \{\emptyset\}$.
    \item $\CC_{\ket{\psi}}(s')\leq \CC_{\ket{\psi}}(s) $ if $s'\subseteq s$.  
    \item Subadditivity,   $\CC_{\ket{\psi}}(s\cup s')\leq \CC_{\ket{\psi}}(s) +\CC_{\ket{\psi}}(s') $ for $s\cap s'=\emptyset$.  
    \item Continuity, let $\ket{\psi}$ and $\ket{\phi}$ be two states such that $\Vert \dya{\psi}-\dya{\phi}\Vert_1 \leq \varepsilon$, then $|\CC_{\ket{\psi}}(s)-\CC_{\ket{\phi}}(s)|\leq 2 \varepsilon$.
\end{enumerate}
\end{theorem}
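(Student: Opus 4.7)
The plan is to address each item in turn, leaning on the equivalent form $\CC_{\ket{\psi}}(s)=\frac{1}{2^{c(s)}}\sum_{\alpha\in\PC(s)\setminus\{\emptyset\}}\bigl(1-\Tr\rho_\alpha^2\bigr)$, obtained by pulling out the $\alpha=\emptyset$ term of Definition~\ref{def:C} (since $\Tr\rho_\emptyset^2=1$). Each summand is the linear (Tsallis-2) entropy of the reduced state $\rho_\alpha$, i.e., a symmetric concave function of the squared Schmidt coefficients across the bipartition $(\alpha,\bar\alpha)$. For items (3) and (4) it will be more convenient to invoke Proposition~\ref{prop:CE-probab} and work instead with $\CC_{\ket{\psi}}(s)=1-\sum_{\vec{z}\in\ZC_0(s)}p(\vec{z})$.

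For item (1), the main step is to observe that for each non-empty $\alpha\subseteq s$ the linear entropy $1-\Tr\rho_\alpha^2$ is a bipartite pure-state entanglement monotone across the cut $(\alpha,\bar\alpha)$. This follows from Nielsen--Vidal majorization theory: any symmetric concave function of the squared Schmidt coefficients is automatically invariant under local unitaries (reduced spectra are preserved) and non-increasing on average under local POVMs by Vidal's sufficient conditions for pure-state monotones. An $n$-party LOCC protocol is, in particular, an LOCC protocol with respect to the coarse-grained bipartition $(\alpha,\bar\alpha)$, so each $1-\Tr\rho_\alpha^2$ is non-increasing on average; a positive linear combination of monotones is again a monotone, completing~(1).

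Item (2) is a direct calculation: for $\ket{\psi}=\bigotimes_j\ket{\phi_j}$ every reduced state $\rho_\alpha$ is pure, so $\Tr\rho_\alpha^2=1$ and the sum over $\PC(s)$ equals $2^{c(s)}$, yielding $\CC_{\ket{\psi}}(s)=0$. For item (3), if $s'\subseteq s$ then $\ZC_0(s)\subseteq\ZC_0(s')$ (adding zero-constraints shrinks the admissible set), so $\sum_{\vec{z}\in\ZC_0(s)}p(\vec{z})\leq\sum_{\vec{z}\in\ZC_0(s')}p(\vec{z})$ and hence $\CC_{\ket{\psi}}(s')\leq\CC_{\ket{\psi}}(s)$. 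For item (4), the disjointness $s\cap s'=\emptyset$ gives $\ZC_0(s\cup s')=\ZC_0(s)\cap\ZC_0(s')$, whence inclusion--exclusion combined with the trivial bound $\sum_{\vec{z}\in\ZC_0(s)\cup\ZC_0(s')}p(\vec{z})\leq 1$ produces $\CC_{\ket{\psi}}(s)+\CC_{\ket{\psi}}(s')-\CC_{\ket{\psi}}(s\cup s')=1-\sum_{\vec{z}\in\ZC_0(s)\cup\ZC_0(s')}p(\vec{z})\geq 0$.

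For item (5), I use H\"{o}lder's inequality to bound $|\Tr\rho^2-\Tr\sigma^2|=|\Tr((\rho-\sigma)(\rho+\sigma))|\leq\|\rho-\sigma\|_1\,\|\rho+\sigma\|_\infty\leq 2\|\rho-\sigma\|_1$, combined with the monotonicity of the trace distance under partial trace to get $\|\rho_\alpha-\sigma_\alpha\|_1\leq\|\dya{\psi}-\dya{\phi}\|_1\leq\varepsilon$ for every $\alpha$; averaging over the $2^{c(s)}$ subsets yields $|\CC_{\ket{\psi}}(s)-\CC_{\ket{\phi}}(s)|\leq 2\varepsilon$. The only real obstacle is item (1): items (2)--(5) are essentially combinatorial or computational, whereas (1) requires carefully invoking the Nielsen--Vidal majorization framework and verifying that any single-party local operation qualifies as local with respect to every coarse-grained bipartition $(\alpha,\bar\alpha)$.
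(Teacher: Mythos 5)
Your proposal is correct, and items (2)--(5) track the paper's own arguments closely: (2) is the same direct calculation; (3) and (4) are the same set-inclusion and inclusion--exclusion arguments, just phrased with the sets $\ZC_0(\cdot)$ rather than their even-Hamming-weight complements $\ZC^{\text{even}}_1(\cdot)$ (your version of (4) is arguably cleaner, and as you implicitly note it never actually uses $s\cap s'=\emptyset$ since $\ZC_0(s\cup s')=\ZC_0(s)\cap\ZC_0(s')$ holds in general); (5) replaces the paper's Cauchy--Schwarz/Schatten-norm chain with H\"older's inequality $|\Tr[(\rho_\alpha-\sigma_\alpha)(\rho_\alpha+\sigma_\alpha)]|\leq\Vert\rho_\alpha-\sigma_\alpha\Vert_1\,\Vert\rho_\alpha+\sigma_\alpha\Vert_\infty\leq 2\Vert\rho_\alpha-\sigma_\alpha\Vert_1$, which yields the same constant. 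The genuine divergence is in item (1). The paper does not reduce to known bipartite monotone theory: it proves from scratch (Lemmas~1 and~2 of the Supplementary Information) that \emph{all local purities are non-decreasing on average under arbitrary separable operations}, via explicit majorization relations for the conditional reduced states (Nielsen's ensemble theorem for the untouched parties, and a polar-decomposition argument for the party being acted on), followed by Schur-convexity and convexity of the purity; monotonicity under LOCC then follows from LOCC being a subset of separable operations. You instead coarse-grain each cut $(\alpha,\bar\alpha)$, note that an $n$-party LOCC protocol is a bipartite LOCC protocol for every such cut, and invoke Vidal's characterization of symmetric concave functions of the Schmidt spectrum as pure-state monotones, then use linearity of the average over outcomes. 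Both are valid; your route is shorter but delegates the majorization work to a cited black box and only establishes monotonicity under LOCC, whereas the paper's self-contained lemmas give the strictly stronger statement for all separable operations. One point worth making explicit in your write-up: the single outcome distribution $\{p_j\}$ of the protocol is common to all $2^{c(s)}-1$ bipartite monotones in the sum, which is what lets you average the inequality term by term.
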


Here, 3) guarantees that the Concentratable Entanglement always measures less entanglement in any subsystem of $s$. In addition, we remark that combining 3) and 4) we have  $\{\CC_{\ket{\psi}}(s) ,\CC_{\ket{\psi}}(s')\}\leq  \CC_{\ket{\psi}}(s\cup s') \leq \CC_{\ket{\psi}}(s) +\CC_{\ket{\psi}}(s')$ for $s\cap s'=\emptyset$.

To further understand how the Concentratable Entanglements measure entanglement, we provide additional details on the probabilities $p(\vec{z})$.  First, consider the following explicit formula for the probabilities $p(\vec{z})$.
\begin{proposition}
Given the expansion of the state   $\ket{\psi}=\sum_{\vec{i}} c_{\vec{i}}\ket{i_1}\ket{i_2}\cdots\ket{i_n}$, the probability  $p(\vec{z})$ for any $\vec{z}\in \ZC$ is given by
\begin{equation}
    p(\vec{z})=\frac{1}{2^n}\sum_{\substack{\vec{i},\vec{i'},\vec{j},\vec{j'}}} c_{\vec{i}} c_{\vec{i'}} c^*_{\vec{j}} c^*_{\vec{j'}} T_{\vec{i}\vec{i'}\vec{j}\vec{j'}}(\vec{z})\,,
\end{equation}
where $T_{\vec{i}\vec{i'}\vec{j}\vec{j'}}(\vec{z})=\prod_k(\delta_{i_k j_k}\delta_{i_k' j'_k}+(-1)^{z_k}\delta_{i_k j'_k}\delta_{i_k' j_k})$, 
and where $z_k$ denotes the $k$-th bit in $\vec{z}$.
\end{proposition}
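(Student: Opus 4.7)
The plan is to compute $p(\vec{z})$ by explicitly writing down the measurement operator associated with the $n$-qubit parallelized SWAP test and then expanding the resulting expectation value in the computational basis. The key observation is that an individual SWAP test on qubits $A_k, B_k$ (Hadamard on the ancilla, controlled-SWAP, Hadamard, measure) implements the projector
\begin{equation}
\Pi_k(z_k) = \tfrac{1}{2}\bigl(\id + (-1)^{z_k} S_k\bigr),
\end{equation}
on the target qubits, where $S_k$ is the SWAP operator acting on the $k$-th pair and $z_k\in\{0,1\}$ is the measurement outcome on the $k$-th ancilla. Because the $n$ controlled-SWAPs act on disjoint pairs of qubits, they commute, the ancilla measurements are mutually independent, and the joint projector associated with an outcome string $\vec{z}\in\ZC$ factorizes as $\Pi(\vec{z}) = \bigotimes_{k=1}^n \Pi_k(z_k) = \frac{1}{2^n}\prod_{k=1}^n\bigl(\id + (-1)^{z_k} S_k\bigr)$.

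Next I would write $p(\vec{z}) = \bra{\psi}\bra{\psi}\Pi(\vec{z})\ket{\psi}\ket{\psi}$ and insert the expansion $\ket{\psi}\ket{\psi} = \sum_{\vec{i},\vec{i}'} c_{\vec{i}} c_{\vec{i}'}\ket{\vec{i}}\ket{\vec{i}'}$ (and its conjugate for the bra). This reduces the problem to computing the matrix element $\bra{\vec{j}}\bra{\vec{j}'}\Pi(\vec{z})\ket{\vec{i}}\ket{\vec{i}'}$. Using the tensor-product form of $\Pi(\vec{z})$, this matrix element factorizes over the qubit index $k$, and each factor is evaluated directly from the action of $S_k$ on the computational basis:
\begin{equation}
\bra{j_k, j'_k}\Pi_k(z_k)\ket{i_k, i'_k} = \tfrac{1}{2}\bigl(\delta_{i_k j_k}\delta_{i'_k j'_k} + (-1)^{z_k}\delta_{i_k j'_k}\delta_{i'_k j_k}\bigr).
\end{equation}
Taking the product over $k=1,\dots,n$ yields exactly $2^{-n}\,T_{\vec{i}\vec{i}'\vec{j}\vec{j}'}(\vec{z})$, and substituting back into the quadruple sum over $\vec{i},\vec{i}',\vec{j},\vec{j}'$ produces the formula in the statement.

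There is no serious obstacle here; the proposition is essentially a bookkeeping exercise once the effective measurement operator of the parallelized SWAP test is identified. The only points requiring care are (i) verifying that the usual single-qubit SWAP-test projector $\tfrac{1}{2}(\id + (-1)^{z_k} S_k)$ genuinely carries over in the parallelized circuit because the individual controlled-SWAPs commute and the ancillas are measured independently, and (ii) tracking the factor of $2^{-n}$ that arises from the product of $n$ such half-projectors, which matches the prefactor in the claimed identity.
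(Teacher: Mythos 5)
Your proposal is correct and follows essentially the same route as the paper: identify the effective measurement operator $\frac{1}{2^n}\prod_k(\id_k+(-1)^{z_k}S_k)$ of the parallelized test (the paper derives this as the Kraus/projector form in its proof of Proposition 1 and reuses it here), sandwich it between the two copies expanded in the computational basis, and factorize the matrix element over the qubit index $k$ to obtain $T_{\vec{i}\vec{i'}\vec{j}\vec{j'}}(\vec{z})$. No gaps.
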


Alternatively, one can also express $p(\vec{z})$ as a function of purities of reduced states of $\ket{\psi}$. Let us define as $w(\vec{z})$ the Hamming weight of $\vec{z}$, and let $\SC_1 \subseteq \SC$ be the set of labels for the bits in $\vec{z}$ that are equal to $1$ (with $|\SC_1|=w(\vec{z})$). Finally, let $c_{hs}$ be the cardinality of $\SC_h\cap s$. One finds  
\begin{equation}\label{eq:prob-purities}
    p(\vec{z})=\frac{1}{2^n}\sum_{s\in\PC(\SC)} (-1)^{c_{hs}} {\rm Tr}\,\rho_x^2\,.
\end{equation}

Equation~\eqref{eq:prob-purities} leads to the following proposition.
\begin{proposition}\label{prop:odd}
If  $\vec{z}$ has odd Hamming weight (if $w(\vec{z})$ is odd), then $p(\vec{z})=0$.
\end{proposition}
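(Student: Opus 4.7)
The plan is to start from the purity expansion of $p(\vec{z})$ given in Eq.~\eqref{eq:prob-purities} and argue that, for odd Hamming weight, the sum vanishes by a sign-cancellation obtained from the complementation involution $s\mapsto \bar{s}=\SC\setminus s$ on $\PC(\SC)$. The only substantive ingredient I would need is the Schmidt-decomposition fact that, because $\ket{\psi}$ is pure, the reduced states on complementary subsystems share their nonzero spectrum, so $\Tr\rho_s^2=\Tr\rho_{\bar{s}}^2$ for every $s\in\PC(\SC)$ (with the convention $\rho_\emptyset:=1$ matching $\Tr\rho_\SC^2=1$).

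Next, I would pair terms in the sum according to complementation. For $n\geq 1$, the map $s\mapsto \bar{s}$ is a fixed-point-free involution on $\PC(\SC)$, so the $2^n$ subsets organize into $2^{n-1}$ disjoint pairs $\{s,\bar{s}\}$. The parity analysis is the crux: by definition of $\SC_1$, one has $|\SC_1\cap s|+|\SC_1\cap \bar{s}|=|\SC_1|=w(\vec{z})$. When $w(\vec{z})$ is odd, the two cardinalities $|\SC_1\cap s|$ and $|\SC_1\cap \bar{s}|$ necessarily have opposite parities, so $(-1)^{|\SC_1\cap s|}=-(-1)^{|\SC_1\cap \bar{s}|}$.

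Combining these two observations, the contribution of $s$ and the contribution of $\bar{s}$ to Eq.~\eqref{eq:prob-purities} are negatives of one another and therefore cancel. Summing over the $2^{n-1}$ complementary pairs exhausts $\PC(\SC)$, so the whole sum vanishes and $p(\vec{z})=0$.

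I do not expect any real obstacle: the proof is a two-line parity argument once one has Eq.~\eqref{eq:prob-purities} in hand. The only step worth flagging is the Schmidt-based purity identity $\Tr\rho_s^2=\Tr\rho_{\bar{s}}^2$, which is what makes the cancellation possible and crucially uses that $\ket{\psi}$ is a pure state; the claim would fail for a generic mixed state.
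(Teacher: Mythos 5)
Your proposal is correct and is essentially the paper's own argument: the paper likewise expands $p(\vec{z})$ as $\frac{1}{2^n}\sum_{x\in\PC(\SC)}(-1)^{|\SC_1\cap x|}\Tr\rho_x^2$, pairs each subset with its complement, invokes the equal spectra of complementary reductions of a pure state to get $\Tr\rho_x^2=\Tr\rho_{x^c}^2$, and notes that odd $w(\vec{z})$ forces $|\SC_1\cap x|$ and $|\SC_1\cap x^c|$ to have opposite parity, giving pairwise cancellation. No gaps; your flagging of purity of $\ket{\psi}$ as the essential hypothesis matches the paper's reasoning.
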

Proposition~\ref{prop:odd} has several implications. First, one can see that by performing the $n$-qubit parallelized SWAP test, one can never measure a bitstring with an odd number of ones. Then, the formula for the Concentratable Entanglements in Proposition~\ref{prop:CE-probab} can be expressed as
\begin{align}
\CC_{\ket{\psi}}(s)=\sum_{\vec{z}\in \ZC^{\text{even}}_1(s)}p(\vec{z})
\label{eq:concentratableven}\,,
\end{align}
where we recall that $\ZC_0(s)$ was defined as the set of all bitstrings with $0$'s on all indices in $s$, and where we respectively define $\ZC^{\text{even}}_1(s)$ and $\ZC^{\text{odd}}_1(s)$ as the compliments of $\ZC_0(s)$ with even and odd Hamming weight,  such that $\ZC_0(s)\cup \ZC^{\text{even}}_1(s)\cup \ZC^{\text{odd}}_1(s)=\ZC$. Simply said,  $Z^{\text{even}}_1(s)$ is the set of bitstrings with even Hamming weight and with at least a $1$ in an index in $s$. For instance, if  $s=\SC$ (i.e., when the Concentratable Entanglement measures all the correlations in $\ket{\psi}$) then $\CC_{\ket{\psi}}(\SC)=1-p(\vec{0})=\sum_{\vec{z}: w(\vec{z}) \text{ even}}p(\vec{z})$, and we recover exactly the conjectured measure of entanglement of~\cite{foulds2020controlled}.

Equation~\eqref{eq:concentratableven} shows that the information of the multipartite entanglement in $\ket{\psi}$ is encoded in the probabilistic outcomes of the $n$-qubit parallelized SWAP test when an even number of control qubits are measured in the $\ket{1}$ state. For instance, the probability of measuring a bitstring with Hamming weight $w(\vec{z})=2$, where $z_k=z_{k'}=1$ contains information regarding the bipartite entanglement between qubits $k$ and $k'$. Specifically, the following proposition holds.
\begin{proposition}\label{prop:bi-separable}
Let $\ket{\psi}$ be a bi-separable state $\ket{\psi}=\ket{\psi}_A\otimes\ket{\psi}_B$. Then for any bitstring $\vec{z}$ of Hamming weight $w(\vec{z})=2$, where $z_k=z_{k'}=1$ we have  $p(\vec{z})=0$ if qubit $k$ is in subsystem $A$, and qubit $k'$ is in subsystem $B$.
\end{proposition}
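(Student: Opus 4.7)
The plan is to exploit the tensor-product structure of the bi-separable state to factorize the parallelized SWAP test into independent tests on the $A$- and $B$-subsystems, and then to invoke Proposition~\ref{prop:odd} on each factor.

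First I would observe that, when $\ket{\psi} = \ket{\psi}_A \otimes \ket{\psi}_B$, the two copies fed into the parallelized SWAP test of Fig.~\ref{fig:SWAPn} may be reordered as $(\ket{\psi}_A \otimes \ket{\psi}_A) \otimes (\ket{\psi}_B \otimes \ket{\psi}_B)$. Because the $k$-th controlled-SWAP acts nontrivially only on ancilla $k$ and the $k$-th qubit of each copy, the whole circuit splits into a tensor product of two independent parallelized SWAP tests: one on the qubits in $A$ (with the ancillas indexed by $A$) and one on the qubits in $B$. The outcome distribution therefore factorizes as
\begin{equation}
p(\vec{z}) \;=\; p_A(\vec{z}_A)\, p_B(\vec{z}_B),
\end{equation}
where $\vec{z}_A$ and $\vec{z}_B$ are the restrictions of $\vec{z}$ to the indices in $A$ and $B$, and $p_A, p_B$ are the outcome probabilities of the parallelized SWAP tests performed on $\ket{\psi}_A$ and $\ket{\psi}_B$ alone.

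Under the hypothesis $w(\vec{z})=2$ with $z_k = z_{k'} = 1$, $k \in A$, $k' \in B$, the restricted strings $\vec{z}_A$ and $\vec{z}_B$ each have Hamming weight exactly $1$, which is odd. Applying Proposition~\ref{prop:odd} to the pure state $\ket{\psi}_A$ gives $p_A(\vec{z}_A) = 0$ (and independently $p_B(\vec{z}_B) = 0$), so $p(\vec{z}) = p_A(\vec{z}_A)\, p_B(\vec{z}_B) = 0$, as claimed.

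The main step to justify carefully is the factorization $p(\vec{z}) = p_A(\vec{z}_A)p_B(\vec{z}_B)$: one must verify that the controlled-SWAPs acting on the $A$-registers commute past those acting on the $B$-registers (immediate, since they act on disjoint qubits), and that Proposition~\ref{prop:odd}, although stated for an $n$-qubit pure state, applies verbatim to a pure state on any number of qubits (also immediate, since its proof does not use $n$). A fallback route, should one prefer a purely algebraic argument, is to insert the bi-separable decomposition $\Tr\rho_s^2 = \Tr\rho_{s\cap A}^2 \cdot \Tr\rho_{s\cap B}^2$ into Eq.~\eqref{eq:prob-purities} and observe that the resulting double sum factorizes into two independent single sums, each of which vanishes by the same cancellation that underlies Proposition~\ref{prop:odd}.
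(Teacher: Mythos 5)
Your proof is correct, but it takes a genuinely different (and cleaner) route than the paper. The paper works directly with the purity expansion of $p(\vec{z})$ from Eq.~\eqref{eq:prob-purities}: it groups the summands into quadruples indexed by $(x_A, x_B)$ and their complements, uses bi-separability to write $\Tr[\rho^2_{x_A\cup x_B}]=\Tr[\rho^2_{x_A}]\Tr[\rho^2_{x_B}]$ together with the equal-spectra property $\vec{\lambda}(\rho_{x_A})=\vec{\lambda}(\rho_{x_A^c})$ of complementary reductions of the pure state $\ket{\psi}_A$, and then checks that the sign $(-1)^{c_{hx}}$ flips between a term and its $A$-complemented partner, giving pairwise cancellation --- essentially a localized re-run of the cancellation in the proof of Proposition~\ref{prop:odd}. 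This is exactly the ``fallback route'' you sketch at the end. Your primary argument instead factorizes at the level of the circuit: since $\ket{\Psi}=(\ket{\psi}_A\ket{\psi}_A)\otimes(\ket{\psi}_B\ket{\psi}_B)$ after reordering and the controlled-SWAPs act on disjoint registers, $p(\vec{z})=p_A(\vec{z}_A)\,p_B(\vec{z}_B)$, and each restricted string has odd Hamming weight~$1$, so Proposition~\ref{prop:odd} (whose proof is indeed independent of the number of qubits) kills each factor separately. What your approach buys is modularity and immediate generality: the same one-line argument gives the stronger statement asserted after the proposition in the main text, namely that $p(\vec{z})$ vanishes whenever the support of $\vec{z}$ has odd overlap with any tensor factor of a multi-separable state, whereas the paper's term-by-term cancellation would have to be redone for each such case. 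Both arguments are valid; yours is the one I would keep.
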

Proposition~\ref{prop:bi-separable} can be generalized to show that the probability of measuring a bitstring with Hamming weight $w(\vec{z})$ contains information regarding the entanglement between the qubits with labels in $\SC_1$. That is, one can prove  that $p(\vec{z})$ is equal to zero if the qubits in $\SC_1$ belong to non-entangled partitions of $\ket{\psi}$.

Here we remark that while the $p(\vec{z})$ contain information regarding the multipartite entanglement in $\ket{\psi}$, these probabilities are generally not entanglement monotones. The exception being $p(\vec{1})$ when $n$ is even, i.e., the probability of measuring all the control qubits in the $\ket{1}$ state. For this special case we find the following.
\begin{proposition}\label{prop:tangle}
If $n$ is even, then $p(\vec{1})$ is an entanglement monotone. Moreover, in this case     $p(\vec{1})= \tau_{(n)}/2^n$.
where $\tau_{(n)}$ is the $n$-tangle.
\end{proposition}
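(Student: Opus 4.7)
My plan has two pieces. For the identification $p(\vec{1}) = \tau_{(n)}/2^n$, I reduce the parallelized SWAP statistics for the all-ones outcome to a projection onto a tensor product of two-qubit singlets, and then express the resulting overlap in the computational basis to match a standard form of the $n$-tangle. For the monotonicity claim, I appeal to the fact that $\tau_{(n)}$ is an established entanglement monotone for even $n$~\cite{wong2001potential}; monotonicity then transfers to $p(\vec{1})$ because the two differ only by the positive constant $1/2^n$.

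First I would use the tensor-product structure of the test to write $p(\vec{1}) = \frac{1}{2^n}\bramatketq{\psi,\psi}{\prod_{k=1}^n(I-S_k)}$, where $S_k$ swaps the $k$-th qubit of the first copy with the $k$-th qubit of the second. The key reduction is the identification $\frac{1}{2}(I - S_k) = \ket{\Psi^-}\!\bra{\Psi^-}_k$, i.e.\ the projector onto the singlet $\ket{\Psi^-} = (\ket{01}-\ket{10})/\sqrt{2}$ that spans the antisymmetric subspace of the $k$-th pair across the two copies. Collapsing the double copy onto the $n$-fold singlet then gives
\begin{equation}
p(\vec{1}) = \bigl|\bra{\Psi^-}^{\otimes n}\ket{\psi}\ket{\psi}\bigr|^2.
\end{equation}

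Next I would expand $\ket{\psi} = \sum_{\vec{i}} c_{\vec{i}}\ket{\vec{i}}$ and evaluate the singlet overlap site by site: each two-qubit singlet forces the two copies to take complementary bit values on its site and contributes the antisymmetric sign $(-1)^{i_k}$, so the only surviving pairs are $(\vec{i},\overline{\vec{i}})$ and the overlap equals $2^{-n/2}\sum_{\vec{i}}(-1)^{|\vec{i}|} c_{\vec{i}} c_{\overline{\vec{i}}}$. On the other hand, applying $\sigma_y = i(\ket{1}\!\bra{0}-\ket{0}\!\bra{1})$ to the standard definition $\tau_{(n)} = |\bra{\psi}\sigma_y^{\otimes n}\ket{\psi^*}|^2$ produces exactly $|\sum_{\vec{i}}(-1)^{|\vec{i}|} c_{\vec{i}} c_{\overline{\vec{i}}}|^2$, with the overall $i^n$ phase washed out by the modulus. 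Comparing the two expressions yields $p(\vec{1}) = \tau_{(n)}/2^n$, and the monotone property follows.

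The main obstacle is careful bookkeeping of signs so that the two calculations line up: the $(-1)^{|\vec{i}|}$ factors must emerge consistently from the singlet antisymmetry in one derivation and from the $\sigma_y$ matrix elements in the other, and the $i^n$ prefactor in the $n$-tangle expansion must be absorbed into the absolute-value squared. A reassuring consistency check falls out automatically: for odd $n$, pairing $\vec{i}$ with $\overline{\vec{i}}$ makes the sum vanish, recovering the special case $\vec{z}=\vec{1}$ of Proposition~\ref{prop:odd} and explaining why the statement is restricted to even $n$.
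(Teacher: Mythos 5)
Your proof is correct, but it takes a genuinely different route from the paper's. The paper decomposes each SWAP operator in the Pauli basis, $S_k = \tfrac{1}{2}(\id_k + \sigma_k^1+\sigma_k^2+\sigma_k^3)$, expands $\bra{\Psi}\prod_k(\id_k - S_k)\ket{\Psi}$ into an alternating sum of squared Pauli correlators $(\Tr[\rho\,\sigma_{i_1}\otimes\cdots\otimes\sigma_{i_n}])^2$, and then matches this term by term against the explicit even-$n$ formula of Jaeger et al.\ for $S^2_{(n)}$, which equals $\tau_{(n)}$ on pure states. You instead observe that $\tfrac{1}{2}(\id_k - S_k)$ is the rank-one projector onto the two-qubit singlet, so that $p(\vec{1})=\bigl|\bra{\Psi^-}^{\otimes n}(\ket{\psi}\ket{\psi})\bigr|^2$, and then check by a direct computational-basis expansion that both this overlap and $\bra{\psi}\sigma_y^{\otimes n}\ket{\psi^*}$ reduce, up to a phase killed by the modulus, to $\sum_{\vec{i}}(-1)^{|\vec{i}|}c_{\vec{i}}c_{\overline{\vec{i}}}$; both calculations are correct as sketched. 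Your route is more elementary and self-contained (it does not import the correlator formula from the literature), it makes the connection to Proposition 6 transparent since the all-ones outcome is literally a projection onto $n$ singlets, and it yields the vanishing of $p(\vec{1})$ for odd $n$ as a free corollary from the cancellation of the $(\vec{i},\overline{\vec{i}})$ pairs. What the paper's route buys is an explicit link between the SWAP-test statistics and the full set of Pauli correlation functions, situating $p(\vec{1})$ within the $S_{(n)}$ formalism. Both arguments handle monotonicity identically, by inheriting it from the known monotonicity of the $n$-tangle.
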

The $n$-tangle was introduced in~\cite{wong2001potential} as a measure of multipartite entanglement in $n$-qubits states that generalizes the Concurrence~\cite{wootters1998entanglement,jaeger2003entanglement}. The $n$-tangle of a pure state $\ket{\psi}$ is  $    \tau_{(n)}=|\braket{\psi}{\widetilde{\psi}}|^2$, with $ \ket{\widetilde{\psi}}= \sigma_y ^{\otimes n } \ket{\psi^*}$ 
where $\sigma_y$ is the Pauli-$Y$ operator, and $\ket{\psi^*}$ is the conjugate of $\ket{\psi}$. Hence, for the special case of two-qubits one finds that $C_{\ket{\psi}}(s)=\tau_{(2)}/4=C^2/4$ for all $s\in\PC(\SC)$, where here $C$ denotes the Concurrence~\cite{wootters1998entanglement}. In general, we see from  Proposition~\ref{prop:tangle} that $n$-tangle is always one of the terms in the summation of Eq.~\eqref{eq:concentratableven}, and hence is included in the Concentratable Entanglements.

\begin{figure}[t!]
\centering
\includegraphics[width=0.95\columnwidth]{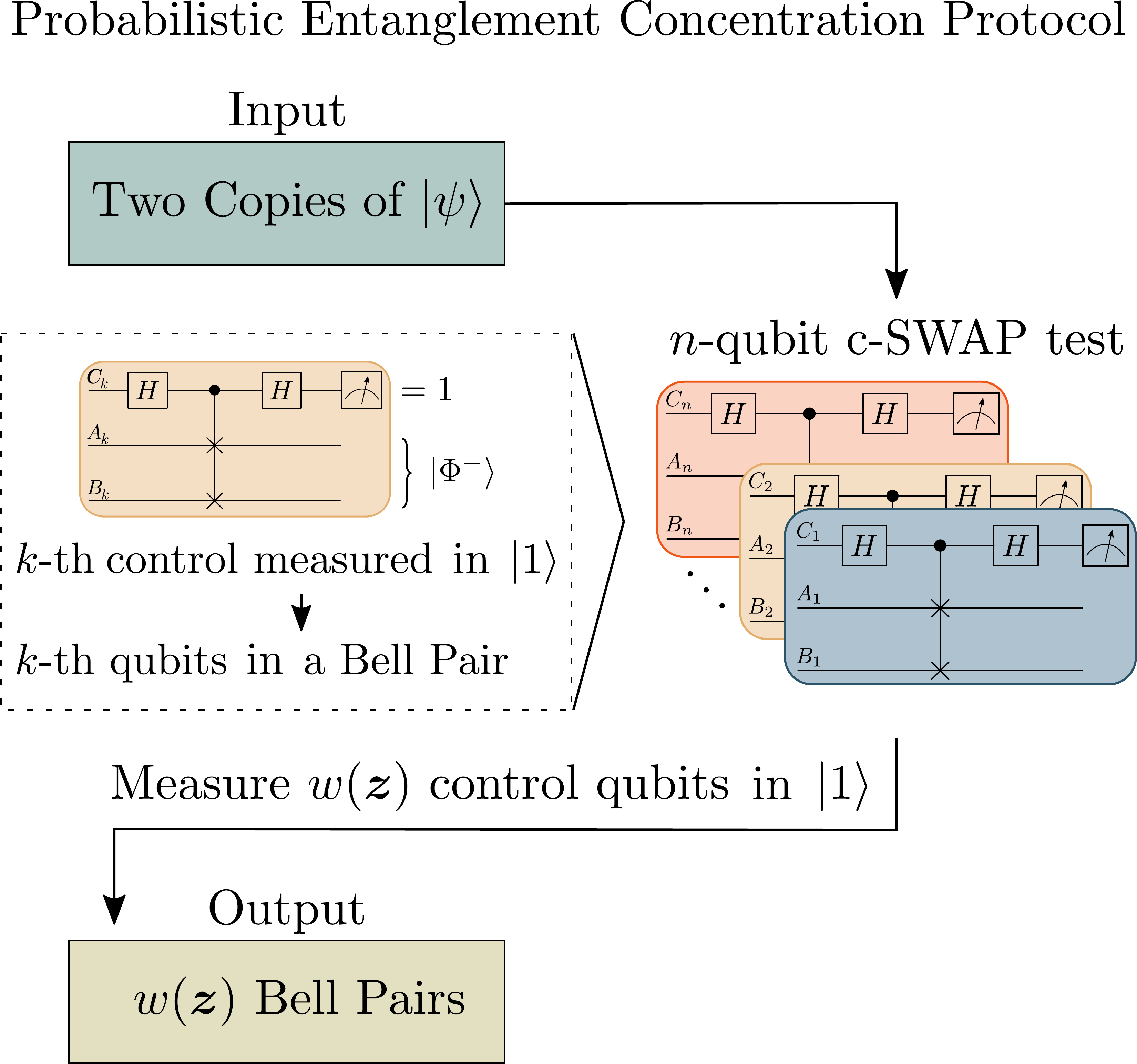}
\caption{\textbf{Protocol for concentrating entanglement.} Given two copies of  $\ket{\psi}$, one can employ the $n$-qubit parallelized SWAP test to prepare Bell pairs between qubits in the different copies of $\ket{\psi}$. Specifically, measuring the $k$-th control qubit in the state $\ket{1}$ implies that the joint state of the $k$-th qubit of each copy of $\ket{\psi}$ is the Bell state $\ket{\Phi^-}=\frac{1}{\sqrt{2}}\left(\ket{01}-\ket{10} \right)$. Hence, a single run of the $n$-qubit parallelized SWAP test has a probability $p(\vec{z})$ of concentrating the multipartite entanglement in the copies of $\ket{\psi}$ and producing $w(\vec{z})$ Bell pairs.  }
\label{fig:concentration}
\end{figure}

Interestingly, $p(\vec{z})$ can also be interpreted  as the probability of concentrating the entanglement in the two copies of $\ket{\psi}$ and ``distilling'' Bell pairs. 
\begin{proposition}\label{prop:BellPair}
Given two copies of $\ket{\psi}$,  if the $k$-th control qubit of the  $n$-qubit parallelized SWAP test  was measured in the state $\ket{1}$, then the joint post-measured state of the $k$-th qubits of each copy of $\ket{\psi}$ is the Bell state $\ket{\Phi^-}=\frac{1}{\sqrt{2}}\left(\ket{01}-\ket{10} \right)$.
\end{proposition}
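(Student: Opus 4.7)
The plan is to reduce the statement to the elementary fact that the antisymmetric subspace of two qubits is one-dimensional and spanned precisely by the singlet $\ket{\Phi^-}$. Since the $n$-qubit parallelized SWAP test factorizes into independent controlled-SWAP tests (one per pair of corresponding qubits), the action on the $k$-th pair commutes with the action on every other pair, so I can restrict attention to a single SWAP test on the $k$-th qubits of the two copies of $\ket{\psi}$ and treat the remaining qubits as spectators.

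First I would recall the standard SWAP-test identity: for any bipartite input $\ket{\Psi}_{AB}$ with a control ancilla $c$, the Hadamard--controlled-SWAP--Hadamard sequence produces
\begin{equation}
\ket{0}_c\otimes \tfrac{1}{2}(\id + \textup{SWAP}_{AB})\ket{\Psi}_{AB} + \ket{1}_c\otimes \tfrac{1}{2}(\id - \textup{SWAP}_{AB})\ket{\Psi}_{AB},
\end{equation}
so outcome $1$ on $c$ implements (up to normalization) the projector $\Pi^- = \tfrac{1}{2}(\id - \textup{SWAP}_{AB})$ onto the antisymmetric subspace of $AB$. For two qubits, this subspace is spanned by the single vector $\ket{\Phi^-}=\tfrac{1}{\sqrt{2}}(\ket{01}-\ket{10})$.

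Next I would apply this to the parallelized case. Writing $\ket{\psi}=\sum_{j,\vec{r}} c_{j,\vec{r}} \ket{j}_k\ket{\vec{r}}_{\bar{k}}$, where $j\in\{0,1\}$ labels the $k$-th qubit and $\bar{k}$ the remaining $n-1$ qubits, the two-copy input is
\begin{equation}
\ket{\psi}_1\otimes\ket{\psi}_2=\sum_{j,j',\vec{r},\vec{r}'} c_{j,\vec{r}}\, c_{j',\vec{r}'}\, \ket{j}_{k_1}\ket{j'}_{k_2}\otimes \ket{\vec{r}}_{\bar{k}_1}\ket{\vec{r}'}_{\bar{k}_2}.
\end{equation}
Applying $\Pi^-$ on $k_1k_2$ annihilates the $j=j'$ terms, and the remaining $(0,1)$ and $(1,0)$ contributions combine into
\begin{equation}
\tfrac{1}{2}\ket{\Phi^-}_{k_1k_2}\otimes \sum_{\vec{r},\vec{r}'}\bigl(c_{0,\vec{r}}c_{1,\vec{r}'}-c_{1,\vec{r}}c_{0,\vec{r}'}\bigr)\ket{\vec{r}}_{\bar{k}_1}\ket{\vec{r}'}_{\bar{k}_2}.
\end{equation}
The state therefore factorizes into $\ket{\Phi^-}_{k_1k_2}$ times a residual vector supported entirely on the other $2(n-1)$ qubits. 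Normalizing, the post-measurement state of the $k$-th qubits of the two copies is exactly $\ket{\Phi^-}$, as claimed. Because the controlled-SWAPs on distinct qubit pairs commute, the same conclusion holds simultaneously for every index $k$ for which the ancilla outcome is $1$, which is the stronger statement implicit in the entanglement-concentration interpretation of $p(\vec{z})$.

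There is no real obstacle here beyond bookkeeping; the one thing to be careful about is verifying that the residual vector is independent of the singlet factor (so that the factorization is genuine and not merely a convenient grouping), which follows because $\Pi^-$ acts as the identity on the $\bar{k}$ registers. I would close by noting that this also makes Proposition~\ref{prop:odd} intuitive: obtaining an odd number of singlets between the two copies would require the joint state $\ket{\psi}_1\otimes\ket{\psi}_2$ to have a nonzero component of odd exchange parity, whereas it is manifestly symmetric under the global swap of the two copies, hence $p(\vec{z})=0$ whenever $w(\vec{z})$ is odd.
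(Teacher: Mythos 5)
Your argument is correct and is essentially the paper's own proof: both identify the outcome-$1$ branch of the $k$-th controlled-SWAP with the projector $\tfrac{1}{2}(\id - S_k)$ onto the antisymmetric subspace of the $k$-th qubit pair, which for two qubits is one-dimensional and spanned by the singlet. Your explicit coefficient expansion showing the residual factor on the remaining qubits (and the harmless off-by-$\sqrt{2}$ prefactor it carries) is just a more detailed rendering of the same step the paper states in one line.
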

Proposition~\ref{prop:BellPair} shows that when one measures (with probability $p(\vec{z})$) a bitstring $\vec{z}$ with (even) Hamming weight $w(\vec{z})$, then one has produced $w(\vec{z})$ Bell pairs between qubits in the different copies of $\ket{\psi}$ with indices in $\SC_h$. This protocol is schematically shown in Fig.~\ref{fig:concentration}. In addition, Proposition~\ref{prop:BellPair} also sheds additional light on the Concentratable Entanglement $\CC_{\ket{\psi}}(s)$ as the probability of obtaining any of the qubits pairs with labels in $s$ in a Bell pair when performing a SWAP test.

\bigskip

\begin{figure}[t!]
\centering
\includegraphics[width=0.95\columnwidth]{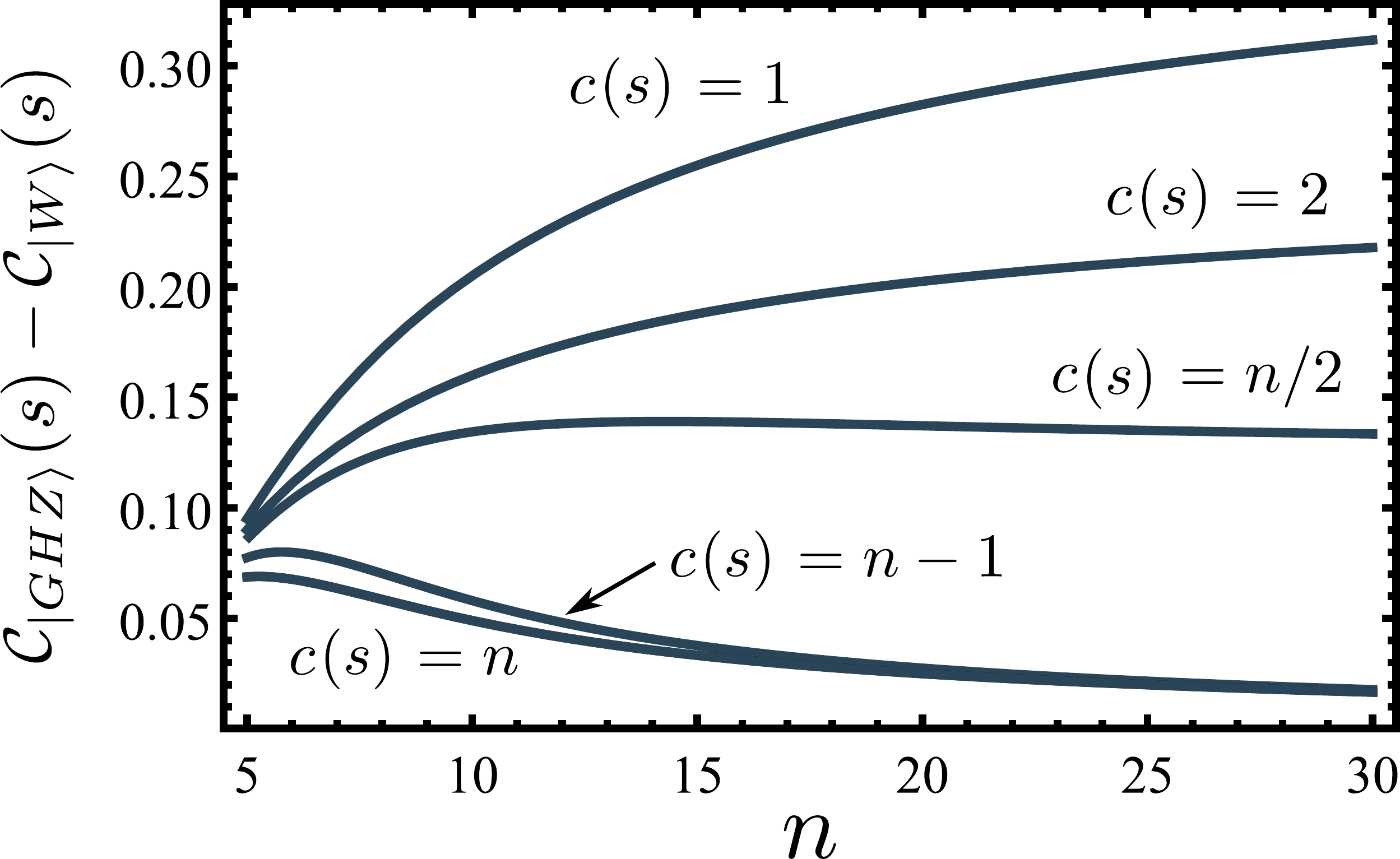}
\caption{\textbf{Comparison of the Concentratable Entanglements for $GHZ$ and $W$ states.} In the figure we show the difference  $\CC_{\ket{GHZ}}(s)-\CC_{\ket{W}}(s)$  versus the number of qubits $n$ for different sets $s$ with cardinalities $c(s)=1,2,n/2,n-1,n$. In all cases we find $\CC_{\ket{GHZ}}(s)>\CC_{\ket{W}}(s)$.   }
\label{fig:comparison}
\end{figure}

\textit{Examples.} Let us now showcase how the probabilities $p(\vec{z})$ and the Concentratable Entanglement can be used to characterize and quantify the multipartite entanglement in $n$-qubit $W$ and $GHZ$ states. First, let us consider the $W$-state $\ket{W}=\frac{1}{\sqrt{n}}\sum_{\vec{x}:w(\vec{x})=1}\ket{\vec{x}}$, i.e., the equal superposition of all states with Hamming weight equal to one. A direct calculation shows that $p(\vec{z})=\frac{1}{n^2}$ for all $\vec{z}$ with $w(\vec{z})=2$ and $p(\vec{z})=0$ for all $\vec{z}$ with $w(\vec{z})>2$. That is, in~\eqref{eq:concentratableven} one can only have terms where $\vec{z}$ has only two 1s. Concomitantly, when employing the $n$-qubit parallelized SWAP test one cannot concentrate the multipartite entanglement in $\ket{W}$ to simultaneously produce more than two Bell pairs. Then, noting that for a given $s$ there are $\sum_{\mu=1}^{c(s)}\binom{n-\mu}{1}=c(s)(2n-c(s)-1)/2$ non-zero terms in~\eqref{eq:concentratableven}, one finds  $\CC_{\ket{W}}(s)=c(s)(2n-c(s)-1)/2n^2$. 

On the other hand, consider the $GHZ$ state $\ket{GHZ}=\frac{1}{\sqrt{2}}\left(\ket{\vec{0}}+\ket{\vec{1}}\right)$. We now find  $p(\vec{z})=\frac{1}{2^n}$ for all $\vec{z}$ with (even) Hamming weight $w(\vec{z})\geq 2$. Unlike the $W$-state, when employing the $n$-qubit parallelized SWAP test one can obtain up to $n$ simultaneous Bell pairs. In this case,  given $s$, there are $\sum_{\mu=1}^{c(s)}\sum_{\nu=1}^{(n-\mu+1)/2}\binom{n-\mu}{2\nu-1}$ non-zero terms in~\eqref{eq:concentratableven}, leading to $\CC_{\ket{GHZ}}(s)=\frac{1}{2}\left(1-1/(2^{c(s)-\delta_{c(s)n}})\right)$,  
where the $\delta_{c(s)n}$ arises from the fact that $c(n)=n$ and $c(n)=n-1$ have the same number of terms. Note that, as expected, both  $\CC_{\ket{W}}(s)$ and $\CC_{\ket{GHZ}}(s)$ only depend on the cardinality of $s$ and not on the actual indices in the set, as both states are  invariant under permutations of the qubits.

We can now show that when  $s=\{j\}$ ($c(s)=1$), then $\CC_{\ket{W}}(\{j\})=\frac{n-1}{n^2}$ and $\CC_{\ket{GHZ}}(\{j\})=\frac{1}{4}$. This implies that  the bipartite entanglement of a single qubit in $\ket{W}$ decreases with $n$, while on the other hand  is constant for any $n$-qubit $\ket{GHZ}$ state. Moreover, if $s=\SC$ ($c(s)=n$), then $\CC_{\ket{W}}(\SC)=\frac{n-1}{2n}$ and $\CC_{\ket{GHZ}}(\SC)=\frac{1}{2}-\frac{1}{2^n}$, and we recover the results in~\cite{foulds2020controlled}. Note that for both cases considered one finds that $\CC_{\ket{GHZ}}(s)>\CC_{\ket{W}}(s)$, and hence that the Concentrable Entanglements detect more multipartite entanglement in $\ket{GHZ}$ than in $\ket{W}$. For $s=\SC$, however, in the limit of $n\rightarrow\infty$ both $\CC_{\ket{GHZ}}(s)$ and $\CC_{\ket{W}}(s)$ tend to the same value of $\frac{1}{2}$.  Lastly, we note that a property of $W$ states is that if one qubit is measured and projected out of the state, one can still measure entanglement in the ensuing state. Specifically, if one measures a qubit, then the  concentratable entanglement will be equal to  $c(s)(2n-c(s)-3)/2(n-1)^2$ (to $0$)  with probability $1-1/n^2$ ($1/n^2$), as this corresponds to  measuring the qubit in the zero (one) state. However, for $GHZ$ states, projecting out just one qubit always yields a state with zero concentratable entanglement -- confirming the well-known fact that while the $W$ state is less entangled than the $GHZ$, it is more robust to noise.

In Fig.~\ref{fig:comparison} we further analyze the difference $\Delta\CC=\CC_{\ket{GHZ}}(s)-\CC_{\ket{W}}(s)$ for different cardinalities of $s$.  Here we see that for  $c(s)\leq n/2$, $\Delta\CC$ increases (or remains constant) as $n$ increases implying that, again, small subsystems of qubits in $\ket{W}$ contain less multipartite entanglement than those in $\ket{GHZ}$. For $c(s)\sim n$, the difference  $\Delta\CC$ decreases as $n$ increases, showing that the total  multipartite entanglement is asymptotically the same for the two states.

\bigskip

\textit{Conclusion.} In this work, we introduced a computable and operationally meaningful family of entanglement monotones called the Concentratable Entanglements. For a pure state $\ket{\psi}$, these quantities can be estimated on a quantum computer given two copies of $\ket{\psi}$  via a parallelized SWAP test. We showed that they quantify and characterize the entanglement in and between subsystems of the composite quantum state in addition to quantifying global entanglement. We derived their operational meaning in terms of the probability of obtaining Bell pairs via the parallelized SWAP test. We also showed that well-known entanglement measures such as the $n$-tangle, Concurrence, and linear entropy of entanglement are recovered as special cases of Concentratable Entanglements. As a special case of our results, we proved a conjecture from Ref.~\cite{foulds2020controlled}, which claimed that the parallelized SWAP test could be used to quantify and categorize pure state multipartite entanglement.

An important future direction will be to experimentally observe our entanglement measures on real quantum devices (e.g., using a quantum optical Fredkin gate~\cite{milburn1989quantum}). A detailed analysis of the impact of hardware noise on the parallelized SWAP test will be useful for such implementations. As noise can turn pure states into mixed states, an additional important direction will be to generalize our entanglement measures (and their operational meaning) to mixed states.  Finally, one could also analyze if a swap test utilizing more copies of $\ket{\psi}$ could provide information that the Concentratable Entanglement cannot.

\section{Acknowledgments}

We thank Louis Schatzki for helpful discussions in deriving the continuity bound. We also thank Michael Walter, Felix Leditzky, and Greg Hamilton for their scrupulous read of the first version of this work, resulting in several small errors being corrected. This work was supported by the Quantum Science Center (QSC), a National Quantum Information Science Research Center of the U.S. Department of Energy (DOE).  JLB was initially supported by the U.S. DOE through a quantum computing program sponsored by the LANL Information Science \& Technology Institute. JLB was also supported by the National Science Foundation Graduate Research Fellowship under Grant No. 1650115. NG acknowledges support from the National Science Centre, Poland, under
the SONATA project ``Fundamental aspects of the quantum set of
correlations'', Grant No.~2019/35/D/ST2/02014). PJC also acknowledges initial support from the LANL ASC Beyond Moore's Law project. MC acknowledges initial support from the Center for Nonlinear Studies at Los Alamos National Laboratory (LANL). 

\bibliography{quantum.bib}

\clearpage
\newpage

\onecolumngrid
\setcounter{section}{0}
\setcounter{proposition}{0}
\setcounter{theorem}{0}
\setcounter{corollary}{0}
\setcounter{figure}{0}
\renewcommand{\figurename}{Sup. Fig.}

\section*{Supplementary Information for ``Computable and operationally meaningful multipartite entanglement measures''}

In this Supplemental Information, we provide additional details for the manuscript ``\textit{Computable and operationally meaningful multipartite entanglement measures}''. First, in Section~\ref{sec:proofs} we present the proofs and derivations of the main results of our manuscript. Then, in Section \ref{sec:error} we study the robustness of the Concentratable Entanglement against error in the copies that are input to the SWAP test. Finally, in Then, in Section \ref{sec:understanding} we present an in-depth exploration of the $n$-qubit parallelized SWAP test.

\section{Proofs of main results}\label{sec:proofs}

\subsection{Preliminaries}

Before proving all propositions from the main text, we state a few well-known facts and two useful lemmas.

First, let $S$ denote the SWAP operator acting on a bipartite Hilbert space $\mathcal{H} \otimes \mathcal{H}$, and let $\rho,\sigma \in \mathcal{H}$ be two quantum states. Then, it can be verified by direct computation that
\begin{align}\label{eq:swaptrick}
    \Tr\left[S (\rho \otimes \sigma)\right] = \Tr[\rho \sigma].
\end{align}
Note that by cyclicity of trace it is also true that $\Tr\left[(\rho \otimes \sigma)S\right] = \Tr[\rho \sigma]$. This is called the \textit{SWAP trick} and will be referred to as such throughout this Supplementary Information. For the special case where $\rho=\sigma$,  the SWAP trick gives $\Tr[S(\rho \otimes \rho)] = \Tr[\rho^2]$. Thus, given two copies of a quantum state, its purity can be estimated by measuring the SWAP operator. 

The second crucial fact is about the spectra of reduced pure quantum states. Let $\ket{\psi}$ be a pure state of a composite system $AB$ with associated density matrix denoted $\rho_{AB}=\dya{\psi}$. Then
\begin{align}
    \vec{\lambda}(\rho_A) &= \vec{\lambda}(\rho_B) 
\end{align}
where $\rho_A=\Tr_B[\rho_{AB}]$, $\rho_B=\Tr_A[\rho_{AB}]$, and $\vec{\lambda}(\rho)$ denotes the vector of non-zero eigenvalues of $\rho$ ordered in decreasing order. In other words, a reduced state of a composite system will have the same spectrum as its complement. This follows from the existence of the Schmidt decomposition for any bipartite pure state \cite{nielsen2000quantum}. 

Lastly, we note that the purity of a quantum state $\rho$, defined as $\Tr[\rho^2]$ is a convex function (i.e. $\Tr[ (p \rho_1 + (1-p)\rho_2) ^2] \leq p \Tr[\rho_1^2] + (1-p) \Tr[\rho_2^2]$ for $p\in [0,1]$). This can be seen in many ways, but we elect to show it from the non-negativity of the Hilbert-Schmidt distance ($0\leq D_{\text{HS}} (\rho_1,\rho_2)$). Recalling the definiton of the Hilbert-Schmidt distance as $D_{\text{HS}} (\rho_1,\rho_2) = \Tr[(\rho_1-\rho_2)^2]$, we can expand and write
\begin{align}
    0 &\leq \Tr[\rho_1^2] - 2 \Tr[\rho_1 \rho_2] +\Tr[\rho_2^2], \\
    2 \Tr[\rho_1 \rho_2]&\leq \Tr[\rho_1^2] + \Tr[\rho_2^2].
\end{align}
Then, multiplying through by $p(1-p)$, we can obtain
\begin{align}
    2 p(1-p) \Tr[\rho_1 \rho_2]&\leq p(1-p)  \Tr[\rho_1^2] + p(1-p) \Tr[\rho_2^2],\\
    &= (p-p^2)  \Tr[\rho_1^2] + \left[(1-p)-(1-p)^2\right] \Tr[\rho_2^2],\\
    p^2 \Tr[\rho_1^2] + 2p(1-p)\Tr[\rho_1\rho_2] +(1-p)^2 \Tr[\rho_2^2] &\leq p\Tr[\rho_1^2] + (1-p)\Tr[\rho_2^2],\\
    \Tr[ \left(p^2\rho_1^2+ 2p(1-p)\rho_1\rho_2 +(1-p)^2 \rho_2^2\right)] &\leq p\Tr[\rho_1^2] + (1-p)\Tr[\rho_2^2],\\
    \Tr[(p\rho_1 +(1-p)\rho_2)^2] &\leq p \Tr[\rho_1^2]+(1-p)\Tr[\rho_2^2],
\end{align}
as desired.

\bigskip

Finally, before proceeding to the proofs of the main results, we state the following Lemmas.

\begin{lemma}{\label{lemma:mono_local}}
Local purities (in any possible partition) cannot decrease, on average, under local operations.
\end{lemma}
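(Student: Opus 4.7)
My plan is to prove the statement for a single local quantum operation performed by one party; monotonicity under general LOCC then follows by iterating this argument, together with the fact that exchange of classical information does not alter the quantum state. Fix an arbitrary subset $\alpha \subseteq \SC$ of qubit labels and consider the purity $\Tr[\rho_\alpha^2]$ of the corresponding reduced state of $\ket{\psi}$. A key ingredient is the pure-state identity $\Tr[\rho_\alpha^2]=\Tr[\rho_{\bar\alpha}^2]$, where $\bar\alpha = \SC\setminus\alpha$, which follows from the Schmidt decomposition and was already recorded in the Preliminaries.

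Let party $j$ perform an operation on its qubit described by Kraus operators $\{K_i\}$ with $\sum_i K_i^\dagger K_i = I$. Without loss of generality I would assume $j \in \bar\alpha$: if instead $j \in \alpha$, the pure-state identity lets me track the purity on $\bar\alpha$ at every outcome, and the argument below applies with the roles of $\alpha$ and $\bar\alpha$ swapped. Under this assumption the post-measurement pure state conditioned on outcome $i$ is
\begin{equation}
\ket{\psi_i} = \frac{1}{\sqrt{p_i}}\bigl(I_\alpha\otimes K_i\otimes I_{\bar\alpha\setminus\{j\}}\bigr)\ket{\psi}\,,\qquad p_i = \bra{\psi}K_i^\dagger K_i\ket{\psi}\,,
\end{equation}
and I denote the associated reduced state on $\alpha$ by $\rho_\alpha^{(i)}$.

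The heart of the proof is a short computation showing that the expected reduced state on $\alpha$ is left unchanged by a local operation on the complementary subsystem:
\begin{equation}
\sum_i p_i\,\rho_\alpha^{(i)} = \Tr_{\bar\alpha}\Bigl[\Bigl(I_\alpha\otimes \textstyle\sum_i K_i^\dagger K_i \Bigr)\ketbra{\psi}{\psi}\Bigr] = \rho_\alpha\,,
\end{equation}
which follows by cyclicity inside the partial trace combined with the completeness relation. Convexity of the purity functional $\rho\mapsto\Tr[\rho^2]$, established in the Preliminaries, then yields
\begin{equation}
\Tr[\rho_\alpha^2] = \Tr\Bigl[\Bigl(\textstyle\sum_i p_i\,\rho_\alpha^{(i)}\Bigr)^2\Bigr] \leq \sum_i p_i\,\Tr\bigl[(\rho_\alpha^{(i)})^2\bigr]\,,
\end{equation}
which is precisely the statement of the lemma.

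The only real obstacle I anticipate is the case where party $j$ lies inside $\alpha$ rather than outside: in that situation the reduced state on $\alpha$ itself does not satisfy the invariance used above, since the Kraus operators no longer commute past the partial trace over $\bar\alpha$. However, because the global post-measurement state $\ket{\psi_i}$ remains pure at every outcome, the pure-state identity gives $\Tr[(\rho_\alpha^{(i)})^2]=\Tr[(\rho_{\bar\alpha}^{(i)})^2]$ for each $i$, so one can simply apply the convexity argument to $\bar\alpha$ instead. Aside from this reduction, everything reduces to a one-line convexity inequality.
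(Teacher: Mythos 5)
Your proof is correct but follows a genuinely different route from the paper's. The paper works with an arbitrary (possibly mixed) global state $\rho$ and splits into two cases: when the local operation acts on a subsystem disjoint from the tracked partition it writes the prior reduced state as the ensemble average of the conditional ones, and when it acts on the tracked subsystem itself it invokes a polar-decomposition identity to obtain $\rho_k=\sum_{j_k}p_{j_k}V_{j_k}\rho_k^{j_k}V_{j_k}^\dagger$; in both cases it then passes through a majorization relation and Schur-convexity before finishing with convexity of the purity. You avoid majorization and the polar decomposition entirely: in the ``operation outside $\alpha$'' case you observe that the average conditional reduced state equals $\rho_\alpha$ exactly, so plain convexity of $\Tr[\rho^2]$ suffices, and you dispose of the ``operation inside $\alpha$'' case by passing to the complement via the pure-state identity $\Tr[\rho_\alpha^2]=\Tr[\rho_{\bar\alpha}^2]$. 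This is cleaner and more elementary, but the simplicity is bought at the cost of generality: the complement trick requires the global state to be pure at every stage of the protocol, so you have established the lemma only for pure global states, whereas the paper's argument (and the lemma as literally stated) applies to mixed $\rho$ as well. For the paper's purposes the restriction is harmless --- the Concentratable Entanglement is a pure-state measure, and every conditional state produced by an LOCC protocol acting on a pure input remains pure, so your iteration argument goes through --- but you should state the purity hypothesis explicitly rather than leaving it implicit.
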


\begin{proof}
Let ${\cal H}=\bigotimes_{k=1}^n {\cal H}_k$ be the joint Hilbert space of an $n$-partite system. Then, let $\Phi_k$ be a CPTP map   acting in a non-trivial way only on the $k$-th subsystem via the set of Kraus operators $\{M_{k}^{j_k}\}$. Note that these local operators satisfy the completeness relation $\sum_{j_k} M_{k}^{j_k \dag} M_{k}^{j_k}=\id_k$. As shown below, the action of $\Phi_k$ on a composite state $\rho$ will not decrease, on average, the purity of any reduced state, $\rho_p$, of $\rho$. 

As previously mentioned, let $\rho$ be an $n$-qubit quantum state in ${\cal H}$. The action of the Kraus operators on the $k$-th subsystem is  
\begin{align}\rho_{j_k}= \frac{1}{p_{j_k}} M_{k}^{j_k}\rho M_{k}^{j_k\dag},
\end{align}
with probability $p_{j_k}=\Tr[\rho M_{k}^{j_k\dag}M_{k}^{j_k}]$. Now let $\rho_p=\Tr_{p^c}[\rho]$ and $\rho_p^{j_k}=\Tr_{p^c}[\rho_{j_k}]$ be the reduced states of the $p$-th subsystem associated to $\rho$ and $\rho_{j_k}$ respectively. If $p\neq k$, then $\rho_p^{j_k}$ is the conditional state of $p$-th subsystem after an operation is performed on the $k$-th subsystem with outcome $j_k$, and these conditional states satisfy
\begin{equation}
    \rho_{p}=\sum_{j_k}p_{j_k}\rho_p^{j_k},
\end{equation}
which in turn implies the majorization relation (Theorem 11 in Ref. \cite{nielsen2001majorization})
\begin{equation}
    \vec{\lambda}(\rho_p)\prec \sum_{j_k} p_{j_k} \vec{\lambda}(\rho_p^{j_k})\label{eq:maj2},
\end{equation}
with $\vec{\lambda}(\sigma)$ the vector of eigenvalues of $\sigma$ arranged in decreasing order.
If, on the other hand, $p=k$, then the reduced state after outcome $j$ is obtained reads
\begin{equation}\label{eq:rho-k-jk}
    \rho_k^{j_k}=\frac{1}{p_{j_k}} \bar M_{k}^{j_k}\rho_{k} \bar M_{k}^{j_k \dag}\,,
\end{equation}
with $\bar M_{k}^{j_k}=\Tr_{k^c}[ M_{k}^{j_k}]$. From the polar decomposition of $\bar M_{k}^{j_k}\sqrt{\rho_k}$, it follows that there exists $V_{j_k}$ unitary such that \cite{nielsen1999conditions}
\begin{equation}\label{eq:polar-decomp}
        \sqrt{\rho_k}\bar M_{k}^{j_k \dag}\bar M_{k}^{j_k}\sqrt{\rho_k}= V_{j_k}\bar M_{k}^{j_k}\rho_{k} \bar M_{k}^{j_k \dag} V_{j_k}^\dag.
\end{equation}
Then, substituting Eq. \eqref{eq:rho-k-jk} into Eq. \eqref{eq:polar-decomp} and using the fact that $\sum_{j_k} (\bar{M}_k^{j_k}) ^\dag \bar{M}_k^{j_k} = \id_k $, one obtains
\begin{equation}\label{eq:rho-k}
    \rho_{k}=\sum_{j_k}p_{j_k} V_{j_k}\rho_k^{j_k}V_{j_k}^\dag.
\end{equation}
As above, Eq. \eqref{eq:rho-k} leads to the majorization relation
\begin{equation}
    \vec{\lambda}(\rho_k)\prec \sum_{j_k} p_{j_k}\vec{\lambda}(\rho_k^{j_k})\label{eq:maj}.
\end{equation}

Now, we use that fact that purity is a Schur-convex function (i.e. $f: \mathbb{R}^n \rightarrow \mathbb{R}$ such that $x \prec y \implies f(x) \leq f(y)$) \cite{nielsen1999conditions}. Thus, Eq. \eqref{eq:maj2} and Eq. \eqref{eq:maj} imply that 
\begin{align}
    \Tr[\rho_p^2] & \leq \Tr\left[\left(  \sum_{j_k} p_{j_k} \rho_p^{j_k} \right)^2\right]\\
    & \leq \sum_{j_k} p_{j_k} \Tr[(\rho_p^{j_k})^2],\label{eq:11}
\end{align}
where the final line holds due to the convexity of purity.  
 Thus we have shown that the action of $\Phi_k$ on $\rho$ will not decrease, on average, the purity of any of the conditional reduced states $\rho_p^{j_k}$ (relative to the original reduced state $\rho_p$).
\end{proof}

Lemma~\ref{lemma:mono_local} can be generalized as follows to separable operations.

\begin{lemma}{\label{lemma:mono_separablel}}
Local purities (in any possible partition) cannot decrease, on average, under separable operations.
\end{lemma}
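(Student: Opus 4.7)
The plan is to extend Lemma~1 from local operations to separable ones, $\Lambda(\rho)=\sum_j M^j \rho (M^j)^\dagger$ with $M^j=\bigotimes_k A_k^j$. For any subset $p$, I would group the parties into $p$ and $p^c$ so that $M^j=M_p^j\otimes M_{p^c}^j$, making the separable operation bipartite across the cut $(p,p^c)$, and reducing the task to showing $\Tr[\rho_p^2]\leq \sum_j p_j\Tr[(\rho_p^j)^2]$ in this bipartite setting.

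My first move would be to mimic the polar-decomposition step of Lemma~1. Writing $\rho_p^j = M_p^j\,\tilde\sigma_p^j\,M_p^{j\dagger}/p_j$ with $\tilde\sigma_p^j=\Tr_{p^c}[(\id_p\otimes M_{p^c}^{j\dagger}M_{p^c}^j)\rho]$, the polar decomposition of $M_p^j\sqrt{\tilde\sigma_p^j}$ produces a unitary $U_j$ on $\mathcal{H}_p$ satisfying $\sqrt{\tilde\sigma_p^j}\,M_p^{j\dagger}M_p^j\,\sqrt{\tilde\sigma_p^j}=p_j\,U_j^\dagger \rho_p^j\,U_j$. This is where the main obstacle appears: summing over $j$ gives $\Tr_{p^c}[(\id_p\otimes\sum_j M_{p^c}^{j\dagger}M_{p^c}^j)\rho]$ on the left, which is generically not $\rho_p$, because the local operators $\{M_{p^c}^j\}$ do not by themselves satisfy a completeness relation on $\mathcal{H}_{p^c}$---only the joint operators $\{M_p^j\otimes M_{p^c}^j\}$ do. Hence the direct analogue of Lemma~1's argument fails, and some additional ingredient is needed.

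To fix this, I would purify $\rho$ to a pure state $\ket{\psi}\in\mathcal{H}_p\otimes\mathcal{H}_{p^c}\otimes\mathcal{H}_R$ with the purifying register $R$ grouped with $p^c$. Since $\Lambda\otimes \id_R$ is still separable across the cut $(p,p^c R)$ and leaves $R$ untouched, the post-measurement states $\ket{\psi^j}$ remain pure, occur with the same probabilities $p_j$, and satisfy $\psi_p=\rho_p$ and $\psi_p^j=\rho_p^j$. The claim thus reduces to pure bipartite states, where working in the Schmidt basis allows the polar decomposition argument to go through and yields the majorization $\vec{\lambda}(\psi_p)\prec \sum_j p_j \vec{\lambda}(\psi_p^j)$; equivalently, this is the well-known monotonicity of the linear entropy of entanglement under LOCC (Vidal), extended to SEP via the equivalence of SEP and LOCC for pure-state transformations (Gheorghiu--Griffiths). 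Combining this majorization with the Schur-convexity of purity and the convexity of purity shown in the preliminaries then delivers $\Tr[\rho_p^2]\leq\sum_j p_j\Tr[(\rho_p^j)^2]$.
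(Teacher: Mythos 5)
Your proof is correct, but it takes a genuinely different route from the paper's. The paper represents the separable map with Kraus operators of the product-instrument form $\bigotimes_k M_k^{j_k}$ (independent local indices $j_1,\dots,j_n$, each factor satisfying its own completeness relation) and then simply iterates Lemma~1 subsystem by subsystem, using transitivity of majorization together with the Schur-convexity and convexity of purity. You instead work with the general separable form $\bigotimes_k M_k^{j}$ carrying a single shared index, and you correctly identify the obstruction this creates: the factors $\{M_{p^c}^{j}\}$ no longer satisfy a completeness relation on $\mathcal{H}_{p^c}$, so the polar-decomposition step of Lemma~1 cannot be iterated directly. Your fix --- purify, absorb the purifying register into $p^c$, reduce to pure bipartite states across the cut $(p, p^c R)$, and invoke the extension of the Nielsen--Vidal ensemble majorization theorem from LOCC to separable operations (Gheorghiu--Griffiths) before applying the same Schur-convexity and convexity arguments --- is sound. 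The trade-off is clear: your argument covers the strictly larger class of genuinely separable (non-product-indexed) operations, at the price of importing a nontrivial external theorem, whereas the paper's argument is elementary and self-contained but, as written, only treats Kraus decompositions that factor into independent (or sequentially conditioned) local instruments --- which is all that is needed for the LOCC monotonicity claimed in Theorem~1. One small caveat: your parenthetical suggestion that the polar-decomposition argument ``goes through'' in the Schmidt basis understates the difficulty --- for a shared index the completeness relation $\sum_j M_p^{j\dagger}M_p^{j}\otimes M_{p^c}^{j\dagger}M_{p^c}^{j}=\id$ does not split, which is precisely why the SEP extension is a theorem in its own right --- so the citation, not that remark, is what carries the step.
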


\begin{proof}
Let $\Phi:{\cal H}\rightarrow{\cal H}$ be a separable CPTP map described by Kraus operators $\{{\cal M}_{\vec{j}}\}$ and let ${\cal H}=\bigotimes_{k=1}^n {\cal H}_k$ be the joint Hilbert space of an $n$-partite system. Because $\Phi$ is separable, it can be represented with Kraus operators of the form~\cite{horodecki2009quantum}
\begin{align}
    {\cal M}_{\vec{j}}=\bigotimes_{k=1}^n M_{k}^{j_k} = M_{1}^{j_1} \otimes M_{2}^{j_2} ... \otimes M_{n}^{j_n}, 
\end{align}
with $\vec{j}=(j_1,j_2,\ldots,j_n)$ and with the $\{M_{k}^{j_k}\}$ operators acting in a non-trivial way only on the $k$-th subsystem via the CPTP map $\Phi_k$.

As shown in Lemma~\ref{lemma:mono_local}, the action of each local operation $\Phi_k$ does not decrease, on average, the local purities. Then, noting that because majorization forms a partial order \cite{nielsen2001majorization}, it is transitive. That is, if $\vec{\lambda}(\rho_p)\prec  \sum_{j_k}p_{j_k}\vec{\lambda}(\rho_p^{j_k})$ and $\vec{\lambda}(\rho^{j_k}_p)\prec \sum_{j'_{k'}}p_{j'_{k'}}\vec{\lambda}(\rho^{j_k j'_{k'}}_p)$, then $\vec{\lambda}(\rho_p)\prec \sum_{j_k,j'_{k'}}p_{j_k}p_{j'_{k'}}\vec{\lambda}(\rho^{j_k j'_{k'}}_p)$. Here, $\rho^{j_k j'_{k'}}_p$ is the reduced state of the $p$-th subsystem after local operation are performed on the $k$-th and $k'$-th subsystem with outcomes $j_k$ and $j'_{k'}$, respectively. 

Hence, since the purity is a Schur-convex function, and the purity is a convex function, one finds
\begin{equation}\label{eq:purities_separable}
    \Tr[\rho_p^2] \leq \sum_{\vec{j}} p_{\vec{j}}  \Tr[(\rho^{\vec{j}}_p)^2]\,,
\end{equation}
where $ p_{\vec{j}}=\prod_{k=1}^n p_{j_k}$, such that $\sum_{\vec{j}} p_{\vec{j}}=1$, and where we employ the notation $\rho^{\vec{j}}_p=\rho^{j_1j_2\cdots j_n}_p$ for the reduced state on the $p$-th subsystem conditioned on the $\vec{j}$ outcome. Note that here we follow the same similar to that used in deriving Eq.~\eqref{eq:11}.  Moreover, we remark that the right hand side of~\eqref{eq:purities_separable} corresponds to the average purity of the conditional states obtained via the separable operation $\Phi$. Thus, we have shown that local purities cannot decrease, on average, under  separable operations $\Phi$, as desired.

\end{proof}

With these facts in mind, we can turn to proving our main results,  which we reproduce here to facilitate readability.

\subsection{Proof of Theorem 1} \label{sec:Thm1-proof}
\begin{theorem}
The Concentratable Entanglement has the following properties:
\begin{enumerate}
    \item $\CC_{\ket{\psi}}(s)$ is non-increasing, on average, under LOCC operations and hence is a well-defined pure state entanglement measure. 
    \item If $\ket{\psi}$ is a separable state of the form $\ket{\psi}=\bigotimes_{j=1}^n \ket{\phi_j}$, then  $\CC_{\ket{\psi}}(s)=0$ for all $s\in\PC(\SC)\setminus \{\emptyset\}$.
    \item $\CC_{\ket{\psi}}(s')\leq \CC_{\ket{\psi}}(s) $ if $s'\subseteq s$.  
    \item Subadditivity,   $\CC_{\ket{\psi}}(s\cup s')\leq \CC_{\ket{\psi}}(s) +\CC_{\ket{\psi}}(s') $ for $s\cap s'=\emptyset$.  
    \item Continuity, let $\ket{\psi}$ and $\ket{\phi}$ be two states such that $\Vert \dya{\psi}-\dya{\phi}\Vert_1 \leq \varepsilon$, then $|\CC_{\ket{\psi}}(s)-\CC_{\ket{\phi}}(s)|\leq 2 \varepsilon$.
\end{enumerate}
\end{theorem}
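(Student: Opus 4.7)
The plan is to handle each of the five items separately, exploiting two complementary representations of $\CC_{\ket{\psi}}(s)$: the purity-based Definition~\ref{def:C}, and the probability-based formula $\CC_{\ket{\psi}}(s)=1-\sum_{\vec{z}\in\ZC_0(s)} p(\vec{z})$ of Proposition~\ref{prop:CE-probab}. Items 1, 2 and 5 are naturally phrased in the purity representation, while items 3 and 4 become essentially immediate once cast in terms of the parallelised SWAP test probabilities.

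For item 1, I would appeal directly to Lemma~\ref{lemma:mono_separablel}. Any LOCC protocol decomposes into stochastic branches producing pure outputs $\ket{\psi_k}$ with probabilities $p_k$, so that for every $\alpha \in \PC(s)$ the lemma gives $\sum_k p_k \Tr[(\rho^{(k)}_\alpha)^2] \geq \Tr\rho_\alpha^2$. Substituting into Eq.~\eqref{eq:concentratable}, taking the average of these inequalities over $\alpha$ with the uniform weight $1/2^{c(s)}$, and subtracting from $1$ flips the inequality, yielding $\sum_k p_k \CC_{\ket{\psi_k}}(s) \leq \CC_{\ket{\psi}}(s)$. For item 2, observe that a fully product state has $\rho_\alpha = \bigotimes_{j\in\alpha}\dya{\phi_j}$, so every $\Tr\rho_\alpha^2 = 1$; the $2^{c(s)}$ such terms then exactly cancel the prefactor, giving $\CC_{\ket{\psi}}(s)=0$.

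For items 3 and 4, I would rewrite $\CC_{\ket{\psi}}(s) = \Pr[\text{at least one SWAP control bit with index in } s \text{ equals } 1]$, which is just the probabilistic reading of Proposition~\ref{prop:CE-probab}. Monotonicity in $s$ follows because $s' \subseteq s$ implies $\ZC_0(s) \subseteq \ZC_0(s')$, so the subtracted probability mass is smaller for $s$, and hence $\CC_{\ket{\psi}}(s') \leq \CC_{\ket{\psi}}(s)$. Subadditivity follows from the elementary union bound: for disjoint $s$ and $s'$, the event "some control bit in $s\cup s'$ is $1$" is contained in the union of the events "some bit in $s$ is $1$" and "some bit in $s'$ is $1$", whose probabilities are precisely $\CC_{\ket{\psi}}(s)$ and $\CC_{\ket{\psi}}(s')$.

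For item 5, I would write
\begin{equation*}
\CC_{\ket{\phi}}(s)-\CC_{\ket{\psi}}(s) = \frac{1}{2^{c(s)}} \sum_{\alpha\in\PC(s)}\bigl(\Tr\rho^\psi_\alpha{}^2 - \Tr\rho^\phi_\alpha{}^2\bigr),
\end{equation*}
and bound each term via the identity $\Tr\sigma^2 - \Tr\tau^2 = \Tr[(\sigma+\tau)(\sigma-\tau)]$ together with H\"older's inequality, giving $|\Tr\sigma^2-\Tr\tau^2| \leq \|\sigma+\tau\|_\infty \|\sigma-\tau\|_1 \leq 2\|\sigma-\tau\|_1$. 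Contractivity of the trace norm under partial trace then yields $\|\rho^\psi_\alpha-\rho^\phi_\alpha\|_1 \leq \|\dya{\psi}-\dya{\phi}\|_1 \leq \varepsilon$, and averaging over the $2^{c(s)}$ terms cancels the prefactor and leaves the claimed bound $2\varepsilon$. The main subtlety is item 1: one must be careful that Lemma~\ref{lemma:mono_separablel} is applied to every reduced-state purity appearing in the definition and not merely the global one, and that the LOCC outputs, though possibly mixed as an average, are treated as an ensemble of pure branches so that $\CC_{\ket{\psi_k}}(s)$ is well-defined. Once the local-purity majorisation of Lemmas~\ref{lemma:mono_local}--\ref{lemma:mono_separablel} is granted, the result follows by linearity of the average over $\alpha$.
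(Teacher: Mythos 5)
Your proposal is correct and follows essentially the same route as the paper: Lemma~\ref{lemma:mono_separablel} for item 1, the trivial purity computation for item 2, the SWAP-test probability representation with set inclusion and a union/inclusion--exclusion bound for items 3 and 4, and a norm-inequality chain combined with contractivity of the trace norm under partial trace for item 5. The only cosmetic differences are that you phrase items 3 and 4 via $\ZC_0(s)$ rather than its even-Hamming-weight complement $\ZC^{\text{even}}_1(s)$, and you bound $|\Tr\sigma^2-\Tr\tau^2|$ by H\"older with the $\infty$- and $1$-norms where the paper uses Cauchy--Schwarz with Hilbert--Schmidt norms; both yield the same factor of $2$.
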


\begin{proof}
\begin{enumerate}
    \item Recall from Definition 1 in the main text, that the Concentratable Entanglement has the form
    \begin{align}
\CC_{\ket{\psi}}(s)=1-\frac{1}{2^{c(s)}}\sum_{\alpha\in \PC(s)} \Tr[ \rho_\alpha^2] \,.
\end{align}
By Lemma \ref{lemma:mono_separablel}, the purity of the reduced quantum states is non-decreasing, on average, under separable operations. Thus, we immediately have that the Concentratable Entanglement is non-increasing, on average, under LOCC (acting on pure states), because LOCC operations are a subset of separable operations~\cite{horodecki2009quantum,vidal1999entanglement}. 
\item For pure, separable input states, all possible reduced states $\rho_{\alpha}$ are also pure ($\Tr[\rho_{\alpha}^2]=1$) thus the Concentratable Entanglement becomes
\begin{align}
\CC_{\ket{\psi}}(s)&=1-\frac{1}{2^{c(s)}}\sum_{\alpha\in \PC(s)} \Tr[\rho_\alpha^2] \,, \\
&= 1-\frac{1}{2^{c(s)}}\sum_{\alpha\in \PC(s)} 1 \,, \\
&= 1- \frac{1}{2^{c(s)}} 2^{c(s)}\,,\\
\CC_{\ket{\psi}}(s)&=0\,,
\end{align}
as is required for a well-defined measure of entanglement \cite{vidal1999entanglement}.

\item As a consequence of Proposition \ref{prop:oddSI} (proved below in Section \ref{sec:prop3-proof} of this Supplementary Information), the Concentratable Entanglement can be expressed as 
\begin{align}
\CC_{\ket{\psi}}(s)=\sum_{\vec{z}\in \ZC^{\text{even}}_1(s)}p(\vec{z}),
\end{align}
where, as state in the main text, $\ZC_0(s)$ is defined as the set of all bitstring with $0$'s on all indices in $s$. Now, if $s' \subseteq s$, then $\ZC^{\text{even}}_1(s') \subseteq \ZC^{\text{even}}_1(s)$ Thus, we can write
\begin{align}
    \CC_{\ket{\psi}}(s)&=\sum_{\vec{z}\in \ZC^{\text{even}}_1(s)}p(\vec{z}),\\
    &=\sum_{\vec{z} \in \ZC^{\text{even}}_1(s) \setminus \ZC^{\text{even}}_1(s')} p(\vec{z}) + \sum_{\vec{z} \in \ZC^{\text{even}}_1(s')} p(\vec{z}), \\
    &\geq \sum_{\vec{z} \in \ZC^{\text{even}}_1(s')} p(\vec{z}), \\
    &=  \CC_{\ket{\psi}}(s')
\end{align}
as desired.
\item As above, we express the Concentratable Entanglement as 
\begin{align}
\CC_{\ket{\psi}}(s)=\sum_{\vec{z}\in \ZC^{\text{even}}_1(s)}p(\vec{z}),
\end{align}
where, as state in the main text, $\ZC_0(s)$ is defined as the set of all bitstring with $0$'s on all indices in $s$. Then we respectively define $\ZC^{\text{even}}_1(s)$ and $\ZC^{\text{odd}}_1(s)$ as the complements of $\ZC_0(s)$ with even and odd Hamming weight,  such that $\ZC_0(s)\cup \ZC^{\text{even}}_1(s)\cup \ZC^{\text{odd}}_1(s)=\ZC$. It follows that when $s \cap s' = \emptyset$, $\ZC^{\text{even}}_1(s \cup s') = \ZC^{\text{even}}_1(s) + \ZC^{\text{even}}_1(s') - \ZC^{\text{even}}_1(s) \cap \ZC^{\text{even}}_1(s)$. This allows us to write
\begin{align}
    C_{\ket{\psi}}(s\cup s') &= \sum_{\vec{z} \in \ZC^{\text{even}}_1(s \cup s') } p(\vec{z}), \\
    &= \sum_{\vec{z} \in \ZC^{\text{even}}_1(s)} p(\vec{z}) + \sum_{\vec{z} \in \ZC^{\text{even}}_1(s') } p(\vec{z}) -\sum_{\vec{z} \in \ZC^{\text{even}}_1(s) \cap \ZC^{\text{even}}_1(s') } p(\vec{z}), \\
    &\leq \sum_{\vec{z} \in \ZC^{\text{even}}_1(s)} p(\vec{z}) + \sum_{\vec{z} \in \ZC^{\text{even}}_1(s') } p(\vec{z}),\\
    &= C_{\ket{\psi}} (s) + C_{\ket{\psi}} (s'),
\end{align}
which is the desired subadditivity relation.

\item The proof for the continuity of the Concentratable Entanglement can be found in~\cite{schatzki2021entangled}. However, we here repeat the proof for consistency. Consider two quantum state $\ket{\psi}$ and $\ket{\phi}$, and let their density matrices respectively be $\rho$ and $\sigma$, and their reduced density matrices be $\rho_\alpha$ and $\sigma_\alpha$.  with reduced state respectively denoted as  First, note that the following chain of inqualities hold
\begin{align}
        |\Tr[{\sigma_\alpha^2-\rho_\alpha^2}]| & = |\Tr{[(\sigma_\alpha+\rho_s)(\sigma_\alpha-\rho_\alpha)]}|\\
                                   & \leq \norm{\sigma_\alpha+\rho_\alpha}_2\norm{\sigma_\alpha-\rho_\alpha}_2\\
                                   & \leq (\norm{\sigma_\alpha}_2+\norm{\rho_\alpha})\norm{\sigma_s-\rho_s}_2\\
                                   & \leq 2\norm{\sigma_\alpha-\rho_s}_2\\
                                   & \leq 2\norm{\sigma_\alpha-\rho_\alpha}_1\\
                                   & \leq 2\norm{\sigma-\rho}_1.
\end{align}
Where the first inequality comes from Cauchy-Schwarz, the second/third from triangle inequality (and the maximum H.S. norm for density matrices), the fourth from Schatten norm monotonicity, and the last from trace norm monotonicity under CPTP maps. Recall the definition of the Concentratable Entanglement and applying the bound above on the $2^n-2$ purities in the summation, we arrive at
    \begin{equation}
        |C(\ket{\psi})-C(\ket{\phi})| \leq 2\left(\frac{2^n-2}{2^n}\right) \left\Vert\dya{\phi}-\dya{\psi}\right\Vert_{1} \leq 2\left\Vert\dya{\phi}-\dya{\psi}\right\Vert_{1}
    \end{equation}

\end{enumerate}
\end{proof}

\subsection{Proof of Proposition 1}\label{sec:prop1-proof}
\begin{proposition}\label{prop:CE-probabSM}
The Concentratable Entanglement can be computed from the outcomes of the $n$-qubit parallelized SWAP test as 
\begin{align}
\CC_{\ket{\psi}}(s)=1-\sum_{\vec{z}\in \ZC_0(s)}p(\vec{z}),
\end{align}
where $\ZC_0(s)$ is the set of all bitstrings with $0$'s on all indices  in $s$. 
\end{proposition}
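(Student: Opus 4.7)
The plan is to compute the measurement probabilities of the parallelized SWAP test directly in terms of purities of reduced states and then match against Definition~1. Since the $n$ controlled-SWAP gates in Fig.~\ref{fig:SWAPn} act on disjoint qubit pairs, the overall test factorizes into $n$ independent single-qubit SWAP tests. For a single SWAP test on qubit $k$ of the two copies with outcome $z_k\in\{0,1\}$, the measurement effect on the two state registers is the projector-valued operator $\tfrac{1}{2}(I + (-1)^{z_k} S_k)$, where $S_k$ is the SWAP operator acting on the $k$-th qubit of each copy. Hence
\begin{equation}
p(\vec{z}) = \Tr\!\left[\prod_{k=1}^n \frac{I + (-1)^{z_k} S_k}{2}\,(\dya{\psi}\otimes\dya{\psi})\right].
\end{equation}

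Next, I would sum this expression over $\vec{z}\in\ZC_0(s)$. By definition of $\ZC_0(s)$, the bits indexed by $k\in s$ are clamped to $z_k=0$, while the bits indexed by $k\notin s$ range freely over $\{0,1\}$. For the latter, the identity $(I+S_k)+(I-S_k)=2I$ collapses the sum, producing a factor of $2^{n-c(s)}$ and erasing $S_k$ from those positions. What remains is
\begin{equation}
\sum_{\vec{z}\in \ZC_0(s)}p(\vec{z}) = \frac{1}{2^{c(s)}}\,\Tr\!\left[\prod_{k\in s}(I + S_k)\,(\dya{\psi}\otimes\dya{\psi})\right].
\end{equation}
Expanding the product over $k\in s$ as a sum over subsets $\alpha\in\PC(s)$ of swap-factors yields $\sum_{\alpha\in\PC(s)} S_\alpha$, where $S_\alpha=\prod_{k\in\alpha}S_k$ is the joint swap on the qubits labelled by $\alpha$.

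The final step is the SWAP trick: since $S_\alpha$ acts non-trivially only on the subsystems in $\alpha$, one can carry out the partial trace over the complement to obtain $\Tr[S_\alpha(\dya{\psi}\otimes\dya{\psi})]=\Tr[S_\alpha(\rho_\alpha\otimes\rho_\alpha)]=\Tr[\rho_\alpha^2]$, with the convention $\rho_\emptyset=1$ handling the $\alpha=\emptyset$ term. Assembling the pieces gives
\begin{equation}
\sum_{\vec{z}\in \ZC_0(s)}p(\vec{z}) = \frac{1}{2^{c(s)}}\sum_{\alpha\in\PC(s)}\Tr[\rho_\alpha^2],
\end{equation}
which matches the quantity subtracted from $1$ in Definition~\ref{def:C}, establishing the identity $\CC_{\ket{\psi}}(s)=1-\sum_{\vec{z}\in \ZC_0(s)}p(\vec{z})$.

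The proof is essentially bookkeeping; the only conceptually delicate point is justifying the factorization of the measurement operator into the product $\prod_k \tfrac{1}{2}(I+(-1)^{z_k}S_k)$, which relies on the parallelized geometry of the circuit and the fact that each ancilla is measured independently. Once that identification is in hand, the collapse of the free bits and the SWAP-trick rewriting of each term as a purity are both routine.
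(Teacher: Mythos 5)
Your proof is correct and follows essentially the same route as the paper's: write $p(\vec{z})$ as the expectation of $\tfrac{1}{2^n}\prod_k(\id_k+(-1)^{z_k}S_k)$, collapse the sum over the free bits $k\notin s$ to leave a factor $1/2^{c(s)}$ times $\prod_{k\in s}(\id_k+S_k)$, expand over subsets $\alpha\in\PC(s)$, and apply the SWAP trick to identify each term with $\Tr[\rho_\alpha^2]$. The only cosmetic difference is that the paper performs the partial trace over the complement of $s$ first (reducing to $\rho_s\otimes\rho_s$) before expanding, whereas you keep the full state and trace each $S_\alpha$ term separately; these are equivalent.
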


\begin{proof}

We can directly compute the outcome probabilities of the $n$-qubit parallelized SWAP test as follows. Consider a Hilbert space of the form $\cal H \otimes \cal H$. Let $\{\ket{i}\}$ be an orthonormal basis of $\cal H$, so that $B=\{\ket{i}\ket{i'}\}$ is an orthonormal product basis of $\cal H \otimes \cal H$. The Swap operator $S:\cal H \otimes \cal H \rightarrow \cal H \otimes \cal H$ is defined by its action on the elements of B:
\begin{equation}\label{eq:sop}
    S\,\ket{i}\ket{i'}=\ket{i'}\ket{i}\,\,\quad \forall\, \quad \ket{i}\ket{i'}\in B. 
\end{equation}
Note that $S$ is both unitary and Hermitian. From this definition, we can explicitly express the unitary transformation $U_s$ corresponding to the parallelized SWAP test depicted in Sup. Fig. \ref{fig:nSWAP}. Given an input state $\ket{\Psi}=\ket{\psi_1}\ket{\psi_2}\in \cal H \otimes \cal H$, we obtain an output state of the form
\begin{equation}\label{eq:clean-outputstate}
    U_{s}\ket{0}\ket{\Psi}=\frac{1}{2}\sum_{z=0}^{1} \ket{z}(\id+(-1)^z S)\ket{\Psi},
\end{equation}
where $\id$ denotes the identity operator on $\cal H \otimes \cal H$. If the control system is traced out, the evolution of the pair of input systems is given by a channel with Kraus operators $K_\pm=\frac{1}{2}(\id\pm S)$. These operators are projectors onto the two eigenspaces of $S$ with eigenvalues $\pm 1$.

Now let $\cal H$ have a tensor product structure itself, i.e., ${\cal H}=\bigotimes_{i=1}^n{\cal H}_i$. Further, let $B=\{\ket{\vec{i}} = \bigotimes_{k=1}^n \ket{i_k}, i_k=0, ..., {\dim({\cal H}_k)-1}\}$ be a basis of $\cal H$, so that the set $\{\ket{\vec{i}}\ket{\vec{i'}}\}$ is a basis of $\cal H \otimes \cal H$. If the parallelized SWAP test is performed pairwise on each of the subsystems, then the evolution of the pair of input systems is described by a channel with Kraus operators
\begin{equation}
    {\cal K}_{\vec{z}}=\frac{1}{2^n}\prod_k(\id_k+(-1)^{z_k}S_k),
\end{equation}
where $S_k: \mathcal{H}_k \otimes  \mathcal{H}_k \rightarrow \mathcal{H}_k \otimes  \mathcal{H}_k $ defined by their action on basis states, $S_k\,\ket{i_k} \ket{i'_k} = \ket{i'_k} \ket{i_k}$. The control register is an $n$-qubit system, so a measurement of the control register can yield and bitstring $\vec{z}\in \{0,1\}^n$. Specifically, after a von Neumann measurement described by projectors $\{\ketbra{\vec{z}} {\vec{z}}\}$ 
is performed on the control qubits, the probability of obtaining bitstring $\vec{z}$ is
\begin{eqnarray} \label{eq:pz}
    p(\vec{z})&=&\bramatket{\Psi}{\frac{1}{2^n}\prod_k(\id_k+(-1)^{z_k}S_k)}{\Psi}.
\end{eqnarray}

Using Eq. \eqref{eq:pz}, we can write
\begin{align}
    1-\sum_{\vec{z}\in \ZC_0(s)}p(\vec{z}) &= 1 - \sum_{\vec{z}\in \ZC_0(s)} \bramatket{\Psi}{\frac{1}{2^n}\prod_k(\id_k+(-1)^{z_k}S_k)}{\Psi},
\end{align}
where the sum is over the subset of bitstrings which have zeroes in the indices of $s$. That is, we sum over $z_k \in \{0,1\}$ for $k \notin s$ which is equivalent to a partial trace over the corresponding subsystem of dimension $2^{n-c(s)}$. Noting that for all remaining $k\in s$,  $z_k=0$, we obtain 
\begin{align}
    1-\sum_{\vec{z}\in \ZC_0(s)}p(\vec{z}) &= 1- \frac{1}{2^{c(s)}}\Tr\left[\rho_s\otimes\rho_s \prod_k(\id_k+S_k)\right], \\
    &= 1 - \frac{1}{2^{c(s)}} \sum_{\alpha \in \PC(s)} \Tr[\rho_{\alpha}^2], \\
    1-\sum_{\vec{z}\in \ZC_0(s)}p(\vec{z}) &=  C_{\ket{\psi}}(s), 
\end{align}
as desired.
\end{proof}

\subsection{Proof of Proposition 2}\label{sec:prop2-proof}
\begin{proposition}
Given the expansion of the state   $\ket{\psi}=\sum_{\vec{i}} c_{\vec{i}}\ket{i_1}\ket{i_2}\cdots\ket{i_n}$, the probabilities  $p(\vec{z})$ for any $\vec{z}\in \ZC$ are given by
\begin{equation}
    p(\vec{z})=\frac{1}{2^n}\sum_{\substack{\vec{i},\vec{i'},\vec{j},\vec{j'}}} c_{\vec{i}} c_{\vec{i'}} c^*_{\vec{j}} c^*_{\vec{j'}} T_{\vec{i}\vec{i'}\vec{j}\vec{j'}}(\vec{z})\,,
\end{equation}
where $T_{\vec{i}\vec{i'}\vec{j}\vec{j'}}(\vec{z})$ are given by $
    T_{\vec{i}\vec{i'}\vec{j}\vec{j'}}(\vec{z})=\prod_k(\delta_{i_k j_k}\delta_{i_k' j'_k}+(-1)^{z_k}\delta_{i_k j'_k}\delta_{i_k' j_k})$, and where $z_k$ denotes the $k$-th component of $\vec{z}$.
\end{proposition}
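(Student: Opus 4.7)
The plan is to substitute the computational-basis expansion of $\ket{\psi}$ directly into the formula for $p(\vec{z})$ already obtained during the proof of Proposition~\ref{prop:CE-probabSM} and read off the stated expression. Concretely, recall from Eq.~\eqref{eq:pz} that
\begin{equation}
p(\vec{z}) = \bra{\Psi}\,\tfrac{1}{2^n}\prod_k(\id_k+(-1)^{z_k}S_k)\,\ket{\Psi}\,,\qquad \ket{\Psi}=\ket{\psi}\otimes\ket{\psi}\,,
\end{equation}
where $S_k$ is the SWAP acting on the $k$-th qubit of each of the two copies of $\ket{\psi}$. Writing $\ket{\psi}=\sum_{\vec{i}} c_{\vec{i}}\ket{i_1}\cdots\ket{i_n}$ gives $\ket{\Psi}=\sum_{\vec{i},\vec{i'}} c_{\vec{i}}c_{\vec{i'}}\ket{\vec{i}}\ket{\vec{i'}}$ and similarly $\bra{\Psi}=\sum_{\vec{j},\vec{j'}} c^*_{\vec{j}}c^*_{\vec{j'}}\bra{\vec{j}}\bra{\vec{j'}}$.

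Next, I would exploit the tensor product structure of the operator. Because the $n$ factors $(\id_k+(-1)^{z_k}S_k)$ act on mutually disjoint pairs of qubits, the matrix element factorizes across $k$. Acting on a single basis pair one computes
\begin{equation}
(\id_k+(-1)^{z_k}S_k)\ket{i_k}\ket{i'_k}=\ket{i_k}\ket{i'_k}+(-1)^{z_k}\ket{i'_k}\ket{i_k}\,,
\end{equation}
so that taking the inner product with $\bra{j_k}\bra{j'_k}$ yields $\delta_{i_k j_k}\delta_{i'_k j'_k}+(-1)^{z_k}\delta_{i_k j'_k}\delta_{i'_k j_k}$. Taking the product over all $k$ reproduces precisely the factor $T_{\vec{i}\vec{i'}\vec{j}\vec{j'}}(\vec{z})$ defined in the statement.

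Finally, I would assemble the pieces: summing over all four multi-indices $\vec{i},\vec{i'},\vec{j},\vec{j'}$ with the coefficients $c_{\vec{i}}c_{\vec{i'}}c^*_{\vec{j}}c^*_{\vec{j'}}$ and the overall prefactor $1/2^n$ gives the claimed identity. There is no significant obstacle here; the argument is a direct evaluation that only requires keeping the two index copies (the ``primed'' and ``unprimed'' ones associated respectively to the two copies of $\ket{\psi}$) carefully separated and invoking the tensor-product factorization of $\prod_k(\id_k+(-1)^{z_k}S_k)$. The mild subtlety is verifying that no cross terms between different $k$ appear, which follows because each SWAP $S_k$ acts nontrivially only on the $k$-th qubit pair.
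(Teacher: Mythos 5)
Your proposal is correct and follows essentially the same route as the paper's proof: both start from the matrix element $p(\vec{z})=\frac{1}{2^n}\bramatket{\Psi}{\prod_k(\id_k+(-1)^{z_k}S_k)}{\Psi}$, expand $\ket{\Psi}$ in the computational product basis, and use the factorization over disjoint qubit pairs to evaluate $T_{\vec{i}\vec{i'}\vec{j}\vec{j'}}(\vec{z})$ term by term. No gaps to flag.
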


\begin{proof}

By expanding the  composite state $\ket{\Psi}$ in an orthonormal basis, $\ket{\Psi} = \sum_{\vec{i}\vec{i'}} c_{\vec{i}} c_{\vec{i'}} \ket{\vec{i}}\ket{\vec{i'}}$, the probability $p(\vec{z})$ of Eq.~\eqref{eq:pz2} reads
\begin{align}\label{eq:pz2}
     p(\vec{z})&=\frac{1}{2^n}\sum_{\vec{i},\vec{i'},\vec{j},\vec{j'}}c_{\vec{i}}c_{\vec{i'}} c^*_{\vec{j}}c^*_{\vec{j'}}\, T_{\vec{i}\vec{i'}\vec{j}\vec{j'}}(\vec{z}), 
\end{align}
where the coefficients $T_{\vec{i}\vec{i'}\vec{j}\vec{j'}}(\vec{z})$ are given by
\begin{eqnarray}
    T_{\vec{i}\vec{i'}\vec{j}\vec{j'}}(\vec{z})&=&\bra{\vec{j}}\bra{\vec{j'}}\prod_k(\id_k+(-1)^{z_k}S_k)\ket{\vec{i}}\ket{\vec{i'}},\notag\\
    &=&\prod_k \bra{j_k}\bra{j'_k}(\id_k+(-1)^{z_k}S_k)\ket{i_k}\ket{i'_k}, \notag\\
    &=&\prod_k(\delta_{i_k j_k}\delta_{i_k' j'_k}+(-1)^{z_k}\delta_{i_k j'_k}\delta_{i_k' j_k}).
\end{eqnarray}
Note that an alternative method of computing $p(\vec{z})$ is provided in Section \ref{sec:understanding-nqubits} of this Supplementary Information.
\end{proof}

\subsection{Proof of Proposition 3}\label{sec:prop3-proof}
\begin{proposition}\label{prop:oddSI}
If  $\vec{z}$ has odd Hamming weight (if $w(\vec{z})$ is odd) then $p(\vec{z})=0$.
\end{proposition}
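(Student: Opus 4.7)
The plan is to rewrite the outcome probability as a signed sum of purities of $\ket{\psi}$, and then pair complementary subsets so that their contributions cancel exactly whenever $w(\vec{z})$ is odd.

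First, I would expand the product $\prod_k(\id_k+(-1)^{z_k}S_k)$ appearing in the formula $p(\vec{z})=\frac{1}{2^n}\bra{\Psi}\prod_k(\id_k+(-1)^{z_k}S_k)\ket{\Psi}$ established in Proposition~\ref{prop:CE-probabSM}. This yields $2^n$ tensor-product operators of the form $(-1)^{|T\cap\SC_1|}\,S_T$, where $\SC_1=\{k:z_k=1\}$ (so $|\SC_1|=w(\vec{z})$) and $S_T$ acts as SWAP on each qubit indexed by $T\subseteq\SC$ and as the identity elsewhere. Applying the SWAP trick term-by-term gives $\bra{\Psi}S_T\ket{\Psi}=\Tr\rho_T^2$, and hence
\begin{equation}
p(\vec{z})\;=\;\frac{1}{2^n}\sum_{T\subseteq\SC}(-1)^{|T\cap\SC_1|}\Tr\rho_T^2,
\end{equation}
which is precisely Eq.~\eqref{eq:prob-purities} of the main text.

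Next, I would exploit the complementary-subsystem symmetry of pure states. Pair each $T\subseteq\SC$ with its complement $T^c=\SC\setminus T$; since no subset of a finite set coincides with its complement, this partitions $\PC(\SC)$ into $2^{n-1}$ disjoint pairs. By the Schmidt-decomposition fact recalled in the preliminaries, the reduced states of the pure state $\ket{\psi}$ on $T$ and $T^c$ have identical non-zero spectra, so $\Tr\rho_T^2=\Tr\rho_{T^c}^2$. On the other hand, $|T^c\cap\SC_1|=w(\vec{z})-|T\cap\SC_1|$, so the two signs obey $(-1)^{|T^c\cap\SC_1|}=(-1)^{w(\vec{z})}(-1)^{|T\cap\SC_1|}$.

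When $w(\vec{z})$ is odd the two signs differ by a factor of $-1$, so within every complementary pair the equal purities cancel and $p(\vec{z})=0$ drops out immediately. There is no genuine obstacle; the only bookkeeping point is to note that the involution $T\leftrightarrow T^c$ on $\PC(\SC)$ is fixed-point free, which is trivial. An essentially equivalent alternative would be to start directly from the explicit sum in Proposition~2 and perform the substitution $\vec{i}\leftrightarrow\vec{j}$, $\vec{i}'\leftrightarrow\vec{j}'$ in the summand, producing the same overall sign flip; I would present the complementary-purity route because it exposes the underlying reason (purity equality of complementary marginals) most transparently.
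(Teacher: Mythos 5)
Your proposal is correct and follows essentially the same route as the paper's proof: expand $\prod_k(\id_k+(-1)^{z_k}S_k)$ into a signed sum of purities via the SWAP trick, then cancel complementary pairs $T\leftrightarrow T^c$ using the equal spectra of complementary marginals of a pure state and the sign flip $(-1)^{w(\vec{z})}=-1$. The only cosmetic difference is that you invoke the fixed-point-free involution directly, whereas the paper double-counts and inserts a factor of $1/2$; these are equivalent.
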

\begin{proof}
Let $\vec{z}=z_1 \cdot z_2 \dotsm z_n$ be a bitstring of length $n$ with odd Hamming weight $w(z)$. Next, define $\SC_w = \{i \in \SC : z_i =1\}$. Finally, denote $c_{wx} = |\SC_w \cap x|$ for any set $x$. Now from Eq.~\eqref{eq:pz}, we know the probability of obtaining a bitstring $\vec{z}$ after measuring the control system can be expressed
\begin{align}
    p(\vec{z}) &= \frac{1}{2^n}\bra{\Psi} \prod_{k=1}^n (\id_k + (-1)^{z_k} S_k) \ket{\Psi}.
\end{align}
Expanding the product yeilds a summation of the form
\begin{align}
    p(\vec{z}) &= \frac{1}{2^n} \sum_{x \in \PC (\SC)} (-1)^{c_{wx}} \Tr[(\rho_x \otimes \rho_x) S_x],
\end{align}
where as in the main text, $\rho_{x}$ denotes the reduced state corresponding to input state $\ket{\psi}$. Using the fact that $\Tr[\rho_x \otimes \rho_x S_x] = \Tr[\rho_x^2]$, we can write
\begin{align}
    p(\vec{z}) &= \frac{1}{2^n} \sum_{x \in \PC (\SC)} (-1)^{c_{wx}} \Tr[\rho_x^2].
\end{align}
Now, we use the following two facts: 1) for pure states $\ket{\psi}$, an associated reduced state ($\rho_x$) and its complement ($\rho_{x^c}$) have the same spectra, and 2) because $w(\vec{z})$ is odd $c_{wx}$ and $c_{wx^c}$ will have opposite even/odd parity. Thus, the corresponding terms in the sum will have the same magnitude but opposite sign, leading to complete pairwise cancellation. To see the cancellation explicitly, we can intentionally double count (and insert a factor of $1/2$ to offset the double counting. Then, for $w(\vec{z})$ odd, we have
\begin{align}
    p(\vec{z}) &= \frac{1}{2^{n+1}} \sum_{x\in \PC(\SC)} (-1)^{c_{wx}} \Tr[\rho_x^2] + (-1)^{c_{wx}+1} \Tr[\rho_{x^c}^2],\\
   &= \frac{1}{2^{n+1}} \sum_{x\in \PC(\SC)} (-1)^{c_{wx}} \left( \Tr[\rho_x^2] -  \Tr[\rho_{x^c}^2]\right),\\
    &= 0.
\end{align}
\end{proof}
\subsection{Proof of Proposition 4}\label{sec:prop4-proof}
\begin{proposition}\label{prop:bi-separableSI}
Let $\ket{\psi}$ be a bi-separable state $\ket{\psi}=\ket{\psi}_A\otimes\ket{\psi}_B$. Then for any bitstring $\vec{z}$ of Hamming weight $w(\vec{z})=2$, where $z_k=z_{k'}=1$ we have  $p(\vec{z})=0$ if qubit $k$ is in subsystem $A$, and qubit $k'$ is in subsystem $B$.
\end{proposition}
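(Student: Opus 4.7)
The plan is to exploit the product structure of the bi-separable input state together with Proposition~\ref{prop:oddSI}, which already tells us that odd-Hamming-weight bitstrings have zero probability in an $n$-qubit parallelized SWAP test. The idea is that if $\ket{\psi}=\ket{\psi}_A\otimes\ket{\psi}_B$, then the joint two-copy state factorizes into a piece living on the $A$-copies and a piece living on the $B$-copies, and the SWAP operators $S_k$ act only on the two copies of the single qubit $k$. Because the qubitwise SWAPs on $A$ and on $B$ commute and act on different tensor factors, the projector-like product $\prod_k(\id_k+(-1)^{z_k}S_k)$ in Eq.~\eqref{eq:pz} splits into an $A$-part and a $B$-part, and so does the probability $p(\vec{z})$.

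Concretely, I would write $\vec{z}=\vec{z}_A\vec{z}_B$, the restrictions of $\vec{z}$ to the qubits of $A$ and $B$, and show
\begin{equation}
p(\vec{z}) \;=\; p_A(\vec{z}_A)\,p_B(\vec{z}_B),
\end{equation}
where $p_A$ and $p_B$ are the outcome distributions of the parallelized SWAP tests on two copies of $\ket{\psi}_A$ and of $\ket{\psi}_B$, respectively. This factorization follows either directly from the inner-product form in the proof of Proposition~\ref{prop:CE-probabSM} applied to $\ket{\Psi}=(\ket{\psi}_A\ket{\psi}_A)\otimes(\ket{\psi}_B\ket{\psi}_B)$, or equivalently by observing that for a product state every reduced purity factorizes, $\Tr[\rho_x^2]=\Tr[\rho_{x\cap A}^2]\Tr[\rho_{x\cap B}^2]$, so the sum in Eq.~\eqref{eq:prob-purities} separates into an $A$-sum and a $B$-sum.

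Now the hypothesis that $w(\vec{z})=2$ with $z_k=z_{k'}=1$, $k\in A$ and $k'\in B$, forces $\vec{z}_A$ and $\vec{z}_B$ each to have Hamming weight exactly $1$. Applying Proposition~\ref{prop:oddSI} to the $A$-subsystem (or equally well to $B$) gives $p_A(\vec{z}_A)=0$, hence $p(\vec{z})=0$, as claimed.

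The only slightly delicate step is the factorization of $p(\vec{z})$, since one must keep careful track of which tensor factors the operators $S_k$ act on inside the double-copy space. But this is routine: one just reorders the tensor factors of $\ket{\Psi}$ so that the two copies of $A$ sit together and the two copies of $B$ sit together, after which the $A$-SWAPs and $B$-SWAPs act on disjoint tensor factors and the matrix element splits as a product. No new ingredients beyond Propositions~\ref{prop:CE-probabSM} and~\ref{prop:oddSI} are needed.
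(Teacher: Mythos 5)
Your proof is correct, but it takes a genuinely different route from the paper's. The paper works directly with the purity expansion $p(\vec{z})=\frac{1}{2^n}\sum_{x\in\PC(\SC)}(-1)^{c_{wx}}\Tr[\rho_x^2]$ and exhibits an explicit four-term grouping of summands, cancelling them pairwise by using that the reduced state on the pure block $A$ has the same spectrum as its complement \emph{within} $A$, with the sign flipping because exactly one of $z_k,z_{k'}$ sits in each block. You instead prove the cleaner structural fact that for a product input the outcome distribution factorizes, $p(\vec{z})=p_A(\vec{z}_A)\,p_B(\vec{z}_B)$ (which is immediate once the two copies of $A$ and the two copies of $B$ are grouped into disjoint tensor factors, since $\prod_k(\id_k+(-1)^{z_k}S_k)$ splits accordingly and $2^n=2^{n_A}2^{n_B}$), and then apply the odd-Hamming-weight vanishing result (Proposition~3) to the weight-one restriction $\vec{z}_A$. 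The two arguments rest on the same underlying mechanism — Proposition~3's proof is itself the complementary-spectra cancellation — but your packaging is more modular and arguably preferable: it immediately yields the generalization stated without proof in the main text, namely that for a state separable across several blocks $p(\vec{z})=\prod_i p_i(\vec{z}_i)$ vanishes whenever any block receives an odd number of $1$'s, whereas the paper's four-term bookkeeping would have to be redone for each partition. No gap; the one step you flag as delicate (tracking which copies the $S_k$ act on) is indeed routine exactly as you describe.
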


\begin{proof}
Let $\SC_A = \{1,2,\dots,n_A\}$ and $\SC_B = \{n_A+1,n_A+2,\dots, n_A n_B\}$ denote the indices of the qubits in system $A$ and $B$ of a bi-separable state $\ket{\Psi} = \ket{\psi}_A \otimes \ket{\psi}_B$, respectively. Then, for indices of interest $i(j) \in S_{A(B)}$, define $\SC_{A(B)}^{i(j)} =\{i (j)\}$. Then, the probability of observing a bitstring with a $z_i = 1$ and $z_j=1$ yields summands of the form 
\begin{align}
    \pm \Tr[\rho^2_{x_A \cup x_B}] \mp \Tr[\rho^2_{x_A^c \cup x_B}] \mp \Tr[\rho^2_{x_A \cup x_B^c}] \pm \Tr[\rho^2_{x_A^c \cup x_B^c}]
\end{align}
where $x_{A(B)} \in \PC(\SC_{A(B)}^{i(j)})$, $x_{A(B)}^c \in \SC_{A(B)}^{i(j)} \setminus x_{A(B)} $, and where the sign of each term is determined by the cardinality of the intersection between $\SC_h$ and the current subscripted set. For example, the first term's sign is determined by $(-1)^{|\SC_h \cap (x_A \cup x_B)|}$. The second term will always have opposite parity as one can verify. Then, using the fact that for pure states $\vec{\lambda}(\rho_x)=\vec{\lambda}(\rho_{x^c})$, we can conclude that the first and second terms will cancel. The same argument applies to the third and fourth terms. All summands being zero guarantees that the sum is zero thus $p(z_{i,j}=1)=0$.
\end{proof}

\subsection{Proof of Proposition 5}\label{sec:prop5-proof}
\begin{proposition}
If $n$ is even, then $p(\vec{1})$ is an entanglement monotone as  we have 
\begin{equation}
    p(\vec{1})=\frac{ \tau_{(n)}}{2^n}\,,
\end{equation}
where $\tau_{(n)}$ is the $n$-tangle.
\end{proposition}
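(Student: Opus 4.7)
\textit{Proof plan for Proposition 5.} The strategy is to recognize $p(\vec{1})$ as the squared overlap of two copies of $\ket{\psi}$ with a tensor product of singlets, and then explicitly match that overlap to the $n$-tangle of $\ket{\psi}$.

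First I would specialize the formula derived in the proof of Proposition~\ref{prop:CE-probabSM} to $\vec{z}=\vec{1}$, giving
\begin{equation}
p(\vec{1})=\bra{\psi}\bra{\psi}\frac{1}{2^n}\prod_{k=1}^n(\id_k-S_k)\ket{\psi}\ket{\psi}.
\end{equation}
The key observation is that on each pair of qubits the operator $(\id_k-S_k)/2$ is the projector onto the one-dimensional antisymmetric subspace, i.e., $(\id_k-S_k)/2=\ketbra{\Phi^-}{\Phi^-}$ with $\ket{\Phi^-}=\tfrac{1}{\sqrt{2}}(\ket{01}-\ket{10})$. Therefore the whole operator is a rank-one projector onto $\ket{\Phi^-}^{\otimes n}$ (with the $k$-th singlet connecting qubit $k$ of the first copy with qubit $k$ of the second copy), and
\begin{equation}
p(\vec{1})=\bigl|\bra{\Phi^-}^{\otimes n}\ket{\psi}\ket{\psi}\bigr|^2.
\end{equation}

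Next I would unpack this overlap in the computational basis. Writing $\ket{\psi}=\sum_{\vec{i}}c_{\vec{i}}\ket{\vec{i}}$ and using $\bra{\Phi^-}(\ket{i_k}\ket{j_k})=\tfrac{1}{\sqrt{2}}(-1)^{i_k}\delta_{j_k,\bar i_k}$, a direct computation yields
\begin{equation}
\bra{\Phi^-}^{\otimes n}\ket{\psi}\ket{\psi}=\frac{1}{2^{n/2}}\sum_{\vec{i}}(-1)^{|\vec{i}|}\,c_{\vec{i}}\,c_{\bar{\vec{i}}},
\end{equation}
where $|\vec{i}|$ is the Hamming weight of $\vec{i}$ and $\bar{\vec{i}}$ its bitwise complement. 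In parallel, I would compute $\ket{\widetilde{\psi}}=\sigma_y^{\otimes n}\ket{\psi^*}$ using $\sigma_y\ket{i_k}=i(-1)^{i_k}\ket{\bar i_k}$, obtaining $\ket{\widetilde{\psi}}=i^n\sum_{\vec{j}}(-1)^{|\bar{\vec{j}}|}c_{\bar{\vec{j}}}^*\ket{\vec{j}}$, and then use $n$ even to replace $(-1)^{|\bar{\vec{j}}|}$ by $(-1)^{|\vec{j}|}$. This yields $\braket{\psi}{\widetilde{\psi}}=i^n\sum_{\vec{i}}(-1)^{|\vec{i}|}c_{\vec{i}}^*c_{\bar{\vec{i}}}^*$, which is (up to the unit-modulus phase $i^n$) the complex conjugate of the singlet overlap above. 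Taking moduli squared gives the desired identity $p(\vec{1})=\tau_{(n)}/2^n$.

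Finally, for the monotonicity claim, I would invoke the result of Ref.~\cite{wong2001potential}, where the $n$-tangle is shown to be an entanglement monotone whenever $n$ is even; since $p(\vec{1})$ differs from $\tau_{(n)}$ only by the positive multiplicative constant $1/2^n$, it inherits this property. The main obstacle in the argument is keeping bookkeeping of signs and phases consistent: one must track the $(-1)^{i_k}$ coming from the singlet overlap, the factor $i^n$ coming from $\sigma_y^{\otimes n}$, and the fact that the parity of $|\bar{\vec{i}}|$ agrees with that of $|\vec{i}|$ only because $n$ is even — this last point is precisely where the even-$n$ hypothesis is used, and omitting it would ruin the identification with the $n$-tangle.
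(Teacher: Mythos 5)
Your argument is correct, and it reaches the identity $p(\vec{1})=\tau_{(n)}/2^n$ by a genuinely different route than the paper. The paper's proof decomposes each swap as $S_k=\tfrac{1}{2}(\id_k+\sigma_k^1+\sigma_k^2+\sigma_k^3)$, expands $\prod_k(\id_k-S_k)$ into a sum of squared Pauli expectation values $(\Tr[\rho\,\sigma_{i_1}\otimes\cdots\otimes\sigma_{i_n}])^2$, and then matches this term-by-term against the explicit even-$n$ formula for $S^2_{(n)}=\Tr[\rho\tilde\rho]=\tau_{(n)}$ taken from Jaeger et al.; the even-$n$ hypothesis enters there because that closed form is only valid for even $n$. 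You instead observe that $\tfrac{1}{2^n}\prod_k(\id_k-S_k)=\bigl(\dya{\Phi^-}\bigr)^{\otimes n}$ is a rank-one projector onto a product of singlets, so that $p(\vec{1})=\bigl|\bra{\Phi^-}^{\otimes n}\ket{\psi}\ket{\psi}\bigr|^2$, and then compare the resulting computational-basis expression $\tfrac{1}{2^{n/2}}\sum_{\vec{i}}(-1)^{|\vec{i}|}c_{\vec{i}}c_{\bar{\vec{i}}}$ directly with $\braket{\psi}{\widetilde{\psi}}$ computed from $\sigma_y^{\otimes n}\ket{\psi^*}$; your sign and phase bookkeeping checks out, and you correctly isolate the single place where even $n$ is needed (the parity of $|\bar{\vec{i}}|=n-|\vec{i}|$ matching that of $|\vec{i}|$). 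Your approach is more self-contained — it needs only the definition of the $n$-tangle plus the monotonicity result of Wong and Christensen, rather than the Pauli-expansion identity — and it has the added virtue of making the link to the Bell-pair interpretation (Proposition 6) manifest: $p(\vec{1})$ is literally the probability of projecting all $n$ qubit pairs onto singlets. As a bonus, your singlet-overlap formula also shows immediately that the sum vanishes for odd $n$, consistent with Proposition 3. Both proofs handle the monotonicity claim identically, by citing the known monotonicity of $\tau_{(n)}$ and noting that a positive constant multiple of a monotone is a monotone.
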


In Ref. \cite{jaeger2003entanglement}, the authors explore a measure of entanglement defined as
\begin{align}
    S^2_{(n)} = \Tr[\rho \tilde{\rho}], 
\end{align}
where $\tilde{\rho}:= (\sigma^2)^{\otimes n}\rho^{*} (\sigma^2)^{\otimes n}$ with $\rho^{*}$ the complex conjugate of $\rho$, and $\sigma^2$ the Pauli-$Y$ operator. The authors also show that this measure of entanglement is in fact equivalent to the $n$-tangle, defined as  $\tau_{(n)}=|\braket{\psi}{\tilde{\psi}}|^2$. To see this, observe that for pure state input, we have
\begin{align}
    S_{(n)}^2 &= \Tr[\rho \tilde{\rho}],\\
    &= \Tr[\ket{\psi}\braket{\psi}{\tilde{\psi}}\bra{\tilde{\psi}} ],\\
    &= \braket{\psi}{\tilde{\psi}}\braket{\tilde{\psi}}{\psi},\\
    &= |\braket{\psi}{\tilde{\psi}}|^2,\\
     S_{(n)}^2 &= \tau_{(n)}.
\end{align}
Further, when the input state is comprised of an \textit{even} number of qubits $n$,  they give the explicit expression
\begin{align}\label{eq:Sn}
    S^2_{(n)} = \frac{1}{2^n} \left((S_{0\dotsm 0})^2 - \sum_{k=1}^n \sum_{i_k =1}^3 (S_{0 \dotsm i_k \dotsm 0})^2 + \sum_{k,l=1}^n \sum_{i_k,i_l=1}^3 (S_{0\dotsm i_k \dotsm i_l \dotsm 0})^2 -  \dots + (-1)^n \sum_{i_1, \dots, i_n =1}^3 (S_{i_1 \dotsm i_n})^2 \right),
\end{align}
where $S_{i_1 \dotsm i_n} := \Tr[\rho \sigma^{i_1}_1 \otimes \dotsm \otimes \sigma^{i_n}_n]$ for $i_k \in \{0,1,2,3\}$ corresponding to the Pauli matrices. For example, $\sigma^2_k$  corresponds to  the Pauli-$Y$ operator on the $k$-th qubit. With these definitions in place, we are ready to prove the proposition.
\begin{proof}
From Eq. \eqref{eq:pz}, the probability of obtaining the all ones bitstring, denoted $\vec{1}$, when measuring the control qubits in the SWAP test is 
\begin{align}
   2^n p(\vec{1}) &= \bramatket{\Psi}{\prod_k(\id_k-S_k)}{\Psi},\\
\end{align}
where we emphasize that the identity and SWAP operators are acting on the $k$-th qubits of the test and copy states. As one can verify, the SWAP operator can be decomposed in the Pauli basis as $S_k = \frac{1}{2}(\id_k +\sigma_k^1+ \sigma_k^2 + \sigma_k^3)$. Using this identity, we can write
\begin{align}
    2^n p(\vec{1}) &= \bramatket{\Psi}{\prod_k \frac{1}{2}(\id_k-\sigma_k^1- \sigma_k^2 -\sigma_k^3)}{\Psi}.\\
\end{align}
Expanding this product yields
\small
\begin{align}
    2^n p(\vec{1}) &= \frac{1}{2^n} ( \bra{\Psi}I^{\otimes n}\ket{\Psi} - \sum_{k=1}^n \sum_{i_k=1}^3 \bra{\Psi} \sigma_{i_k} \ket{\Psi} + \sum_{k,l=1}^n \sum_{i_k,i_l =1}^3 \bra{\Psi}\sigma_{i_k}\otimes \sigma_{i_l}  \ket{\Psi} +\dotsm + (-1)^n \sum_{i_1, \dots, i_n=1}^3 \bra{\Psi} \sigma_{i_1} \otimes \dots \otimes \sigma_{i_n}\ket{\Psi}),
\end{align}
\normalsize
where $\sigma_{i_k}$ is implicit notation for $\sigma_{i}$ on the $k$-th qubit of the test \textit{and} copy state and identity everywhere else. Now, recalling that $\ket{\Psi}=\ket{\psi}\ket{\psi}$, we can write each term as a trace. For example, $\bra{\psi}\bra{\psi} \id^{\otimes n} \ket{\psi} \ket{\psi} = \Tr[\rho \id^{\otimes n}]^2$ where $\rho=\ket{\psi}\bra{\psi}$. We can then express the sum as 
\begin{align}
    2^n p(\vec{1}) &= \frac{1}{2^n} \Big((\Tr[ \rho \id^{\otimes n}])^2 - \sum_{k=1}^n \sum_{i_k=1}^3 (\Tr[\rho \sigma_{i_k}])^2, \\
    &+ \sum_{k,l=1}^n \sum_{i_k,i_l =1}^3 (\Tr[\rho \sigma_{i_k}\otimes \sigma_{i_l}])^2 - \dotsm + (-1)^n \sum_{i_1, \dots, i_n=1}^3 (\Tr[\rho \sigma_{i_1} \otimes \dotsm \otimes \sigma_{i_n}] )^2\Big).
\end{align}
Then, recalling $S_{i_1 \dotsm i_n} := \Tr[\rho \sigma^{i_1}_1 \otimes \dotsm \otimes \sigma^{i_n}_n]$ for $i_k \in \{0,1,2,3\}$, we can write
\begin{align}
    2^n p(\vec{1}) &= \frac{1}{2^n} \left((S_{0\dotsm 0})^2 - \sum_{k=1}^n \sum_{i_k =1}^3 (S_{0 \dotsm i_k \dotsm 0})^2 + \sum_{k,l=1}^n \sum_{i_k,i_l=1}^3 (S_{0\dotsm i_k \dotsm i_l \dotsm 0})^2 -  \dots + (-1)^n \sum_{i_1, \dots, i_n =1}^3 (S_{i_1 \dotsm i_n})^2 \right),\\
    2^n p(\vec{1})&= S_{(n)}^2.
\end{align}
Finally, we saw above that for pure state inputs, $S_{(n)}^2 = \tau_{(n)}$ and thus we have
\begin{align}
    p(\vec{1}) &= \frac{\tau_{(n)}}{2^n},
\end{align}
as we set out to show. This result immediately implies that $p(\vec{1})$ is itself an entanglement monotone.
\end{proof}

\subsection{Proof of Proposition 6}\label{sec:prop6-proof}
\begin{proposition}
Given two-copies of $\ket{\psi}$  if the $k$-th control qubit of the  $n$-qubit parallelized SWAP test  was measured in the state $\ket{1}$, then the joint post-measured state of the $k$-th qubits of each copy of $\ket{\psi}$ is the Bell state $\ket{\Phi^-}=\frac{1}{\sqrt{2}}\left(\ket{01}-\ket{10} \right)$.
\end{proposition}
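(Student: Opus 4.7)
The strategy is to reduce the question to the action of the two-qubit SWAP test on the $k$-th pair of qubits, and then to exploit the fact that the $-1$-eigenspace of the two-qubit SWAP operator $S$ is one-dimensional. The parallel structure of the circuit in Fig.~\ref{fig:SWAPn} implies that the $k$-th controlled-SWAP touches only the $k$-th ancilla together with the $k$-th qubit of each copy, and commutes with all other controlled-SWAPs. Consequently, conditioning on the outcome $z_k$ of the $k$-th ancilla alone is equivalent to applying the local Kraus operator $K_{z_k}^{(k)} = \tfrac{1}{2}(\id_k + (-1)^{z_k} S_k)$ on the two qubits in question, tensored with the identity on the remaining $2(n-1)$ qubits -- exactly the Kraus decomposition already derived in Section~\ref{sec:prop1-proof}.

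Next I would identify $K_{-}^{(k)} = \tfrac{1}{2}(\id_k - S_k)$ as the orthogonal projector onto the antisymmetric subspace of $\mathbb{C}^2 \otimes \mathbb{C}^2$, since $S_k$ is Hermitian with spectrum $\{\pm 1\}$. On two qubits this subspace is one-dimensional and spanned by the singlet, so
\begin{equation}
    K_{-}^{(k)} = \ketbraq{\Phi^{-}}, \qquad \ket{\Phi^{-}} = \tfrac{1}{\sqrt{2}}(\ket{01} - \ket{10}).
\end{equation}
Being a rank-one projector, its action factorizes across the $k$-th pair and the rest as
\begin{equation}
    (K_{-}^{(k)} \otimes \id_{\text{rest}})\,\ket{\psi}\ket{\psi} \;=\; \ket{\Phi^{-}}_{k} \otimes \ket{\eta}_{\text{rest}},
\end{equation}
where $\ket{\eta}_{\text{rest}} = (\bra{\Phi^{-}}_{k} \otimes \id_{\text{rest}})\,\ket{\psi}\ket{\psi}$ lives on the remaining $2(n-1)$ qubits.

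Finally, tracing out everything but the $k$-th pair and normalizing yields $\ketbraq{\Phi^{-}}$ for the reduced state of the $k$-th pair, as claimed. The only subtlety is that the normalization is well defined, which is automatic once we condition on the positive-probability event $z_k = 1$, so that $\ket{\eta}_{\text{rest}} \neq 0$. I do not expect any real obstacle beyond carefully tracking the tensor-factor bookkeeping: the key algebraic fact -- the one-dimensionality of the two-qubit antisymmetric subspace -- does all the heavy lifting, and the factorization of the parallel circuit makes the restriction to a single pair immediate.
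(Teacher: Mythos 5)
Your proposal is correct and follows essentially the same route as the paper: both identify the conditional Kraus operator $K_{-}^{(k)}=\tfrac{1}{2}(\id_k - S_k)$ as the projector onto the $-1$ eigenspace of the two-qubit SWAP, i.e.\ the antisymmetric subspace spanned by the singlet. Your version is in fact slightly more careful than the paper's, since you explicitly invoke the one-dimensionality of that subspace to justify the factorization $(K_{-}^{(k)}\otimes\id_{\text{rest}})\ket{\psi}\ket{\psi}=\ket{\Phi^-}_k\otimes\ket{\eta}_{\text{rest}}$, whereas the paper asserts directly that the post-measurement state lies in the eigenstate of $\tfrac{1}{2}(\id - S)$.
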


\begin{proof}
Let us recall from Eq.~\eqref{eq:clean-outputstate}, that the output state from SWAP test is $\frac{1}{2}\sum_{z=0}^{1} \ket{z}(\id+(-1)^z S)\ket{\Psi}$. If the control system is traced out, the evolution of test and copy systems is governed by Kraus operators of the form $K_{\pm} = \frac{1}{2}(\id \pm S)$, which are simply projectors onto the eigenspaces of the SWAP operator with eigenvalues $\pm 1$. The control system being measured in the state $\ket{1}$, corresponds to the test and copy systems evolving according to the Kraus operator, $\frac{1}{2}(\id-S)$. Thus, the reduced state on the test and copy systems will be in the eigenstate of $\frac{1}{2}(\id-S)$, which as one can verify is
\begin{align}
    \ket{\Psi^-} = \frac{1}{\sqrt{2}} (\ket{01}-\ket{10}).
\end{align}
\end{proof}

This completes the section of proofs of our main results. The following section aims to provide a deeper understanding of the parallelized SWAP test for the interested reader.

\section{Robustness of Concentratable Entanglement under Unequal Input States}\label{sec:error}
In this section we show that the Concentratable Entanglement is robust against errors in the input states to the SWAP tests. While this was initially analyzed in ~\cite{foulds2020controlled} for $GHZ$ and $W$ states, we here derive a general  upper bound on the error that results from inputting unequal pure states to the SWAP test.
\subsection{Derivation of the Error Bound}
Consider inputting into the SWAP test general $n$-qubit pure states $\rho := \ket{\psi}\bra{\psi}$ and $\rho' = \ket{\psi'}\bra{\psi'}$ which satisfy:
\begin{align}\label{eq:lowdistance}
    D(\rho,\rho') < \varepsilon,
\end{align}
where $D(\rho,\rho') = \frac{1}{2} \Tr[|\rho-\rho'|]$ is the trace distance and where $\varepsilon > 0.$  Next, because the trace distance is non-increasing under CPTP maps \cite{nielsen2000quantum}, we have 
\begin{align} \label{eq:TrDistDataProcessing}
    D(\rho_{\alpha},\rho'_{\alpha}) \leq D(\rho,\rho'),
\end{align}
where, as in the main text, $\rho_{\alpha}$ is the reduced state obtained by taking the partial trace (which is a CPTP map) over all other parties in the full state $\rho$.
Furthermore, it has been shown that~\cite{coles2019strong}
\begin{align}\label{eq:HStracebound}
    \frac{1}{4} D_{\text{HS}}(\rho_{\alpha},\rho'_{\alpha}) \leq D^2(\rho_{\alpha},\rho'_{\alpha}),
\end{align}
where $D_{\text{HS}}(\rho_{\alpha}-\rho'_{\alpha}) = \Tr[(\rho_{\alpha}-\rho'_{\alpha})^2]$ is the Hilbert-Schmidt distance. With these inequalities in mind, we can write 
\begin{align}
    \varepsilon^2 &> D^2(\rho,\rho'),\\
    &\geq  D^2(\rho_{\alpha},\rho'_{\alpha}), \\
    &\geq  \frac{1}{4} D_{\text{HS}}(\rho_{\alpha},\rho'_{\alpha}),\\
    4\varepsilon^2 &\geq \Tr[(\rho_{\alpha}-\rho'_{\alpha})^2],\\
    &= \Tr[\rho^2_{\alpha}] +\Tr[\rho'^2_{\alpha}] -2\Tr[\rho_{\alpha}\rho'_{\alpha}].
\end{align}
Thus, we now have 
\begin{align}\label{eq:TrDistBound}
     0\leq \Tr[\rho^2_{\alpha}] +\Tr[\rho'^2_{\alpha}] -2\Tr[\rho_{\alpha}\rho'_{\alpha}] < 4\varepsilon^2 \,,
\end{align}
where we recall that the lower bound arises from the fact that the Hilbert-Schmidt distance is non-negative.
In the proof of Prop. \ref{sec:prop1-proof}, the input state was assumed to be of the form $\ket{\Psi}=\ket{\psi}\ket{\psi'}$, where $\ket{\psi'}$ need not equal $\ket{\psi'}$. Thus, the Concentratable Entanglement in the case where $\ket{\psi}\neq \ket{\psi'}$ is still defined as in Definition 1 in the main text. Denoting the Concentratable Entanglement when $\ket{\psi'}\neq \ket{\psi'}$ as $C_{\ket{\psi}\ket{\psi'}}$ we have from Definition 1 in the main text
\begin{align}
\CC_{\ket{\psi}\ket{\psi'}}(s)=1-\frac{1}{2^{c(s)}}\sum_{\alpha\in \PC(s)} \Tr[ \rho_\alpha \rho'_{\alpha}] \,.
\end{align}
We also note that the definition of Concentratable Entanglement can be rearranged to give
\begin{align}
\sum_{\alpha\in \PC(s)} \Tr[ \rho^2_\alpha] = 2^{c(s)} (1-\CC_{\ket{\psi}}(s))\,.
\end{align}
Thus, summing over Eq. \eqref{eq:TrDistBound}, we can use this rearranged form to obtain
\begin{align}
0\leq\sum_{\alpha\in \PC(s)}(\Tr[\rho^2_{\alpha}] +\Tr[\rho'^2_{\alpha}] -2\Tr[\rho_{\alpha}\rho'_{\alpha}]) &< \sum_{\alpha\in \PC(s)} 4\varepsilon^2, \\
0\leq2^{c(s)} (1-\CC_{\ket{\psi}}(s)) + 2^{c(s)} (1-\CC_{\ket{\psi'}}(s)) -2 \left(2^{c(s)} (1-\CC_{\ket{\psi}\ket{\psi'}}(s))\right)&< 2^{c(s)} 4\varepsilon^2,\\
0\leq \left[ \CC_{\ket{\psi}\ket{\psi'}}(s) - \CC_{\ket{\psi}}(s) \right] + \left[ \CC_{\ket{\psi}\ket{\psi'}}(s)- \CC_{\ket{\psi'}}(s) \right] &< 4\varepsilon^2. \label{eq:errorBound}
\end{align}
This result shows that total error in the Concentratable Entanglements due to assuming $\ket{\psi}=\ket{\psi'}$ when really $\ket{\psi}\neq\ket{\psi'}$ is small when $D(\rho,\rho')$ is small. That is, when the exact and error states are close in trace distance. 

\subsection{Numerical Demonstration of Error Bound}

\begin{figure}[t!]
\centering
\includegraphics[width=0.85\columnwidth]{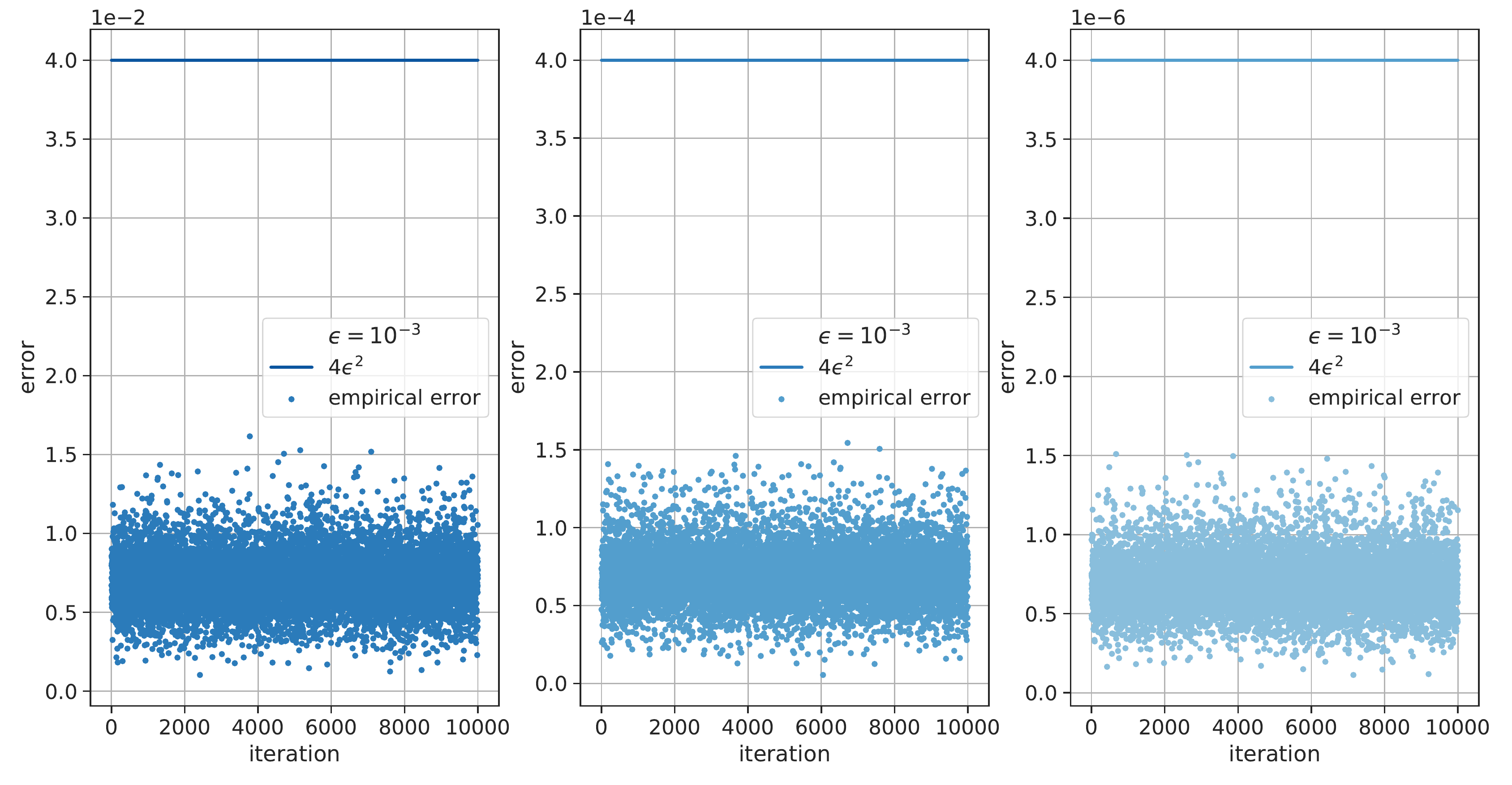}
\caption{\textbf{Verification of error bound for 3-qubit input states.} Here we  randomly generate 10,000 pairs of states according to the Haar measure which are $\varepsilon$-close in trace distance. We then plot Eq. \eqref{eq:errorBound} for errors ranging 3 orders of magnitude: $\varepsilon = 0.1, 0.001, 0.0001$.}
\label{fig:error}
\end{figure}

We now provide a 3-qubit numerical demonstration of the bound. First we need to generate random states that are $\varepsilon$-close in trace distance. Because we are considering pure states, we will make use of the fact that 
\begin{align}
D(\psi,\psi') = \frac{1}{2}\|\psi-\psi'\|_1=\sqrt{1-|\langle \psi | \psi' \rangle|^2}.
\end{align}
Then, we can generate a Haar random quantum state $|\psi\rangle$ and set $|\psi' \rangle = \delta |\psi\rangle + \sqrt{1-\delta^2} (\mathbb{I}-|\psi\rangle \langle \psi|)|0\rangle,$ where $\delta=\sqrt{1-\varepsilon^2}$. We can then verify that this construction results in random states that are $\varepsilon$-close in trace distance
\begin{align}
    \langle \psi | \psi' \rangle = \delta \langle \psi | \psi\rangle + \sqrt{1-\delta^2}(\langle \psi | 0\rangle - \langle \psi | 0 \rangle) = \delta.
\end{align}
Thus, the trace distance becomes
$$D(\psi,\psi')=\sqrt{1-\delta^2} = \varepsilon,$$
as desired. 

In Sup. Fig.~\ref{fig:error} we show the results obtained by sampling 10000 states with a given value of $\varepsilon$ and computing the difference in Concentratable Entanglement of Eq.~\eqref{eq:errorBound}.

\section{Understanding the Parallelized SWAP Test}\label{sec:understanding}
\subsection{Single Qubit}

\begin{figure}[h]
\includegraphics[width=0.9\columnwidth]{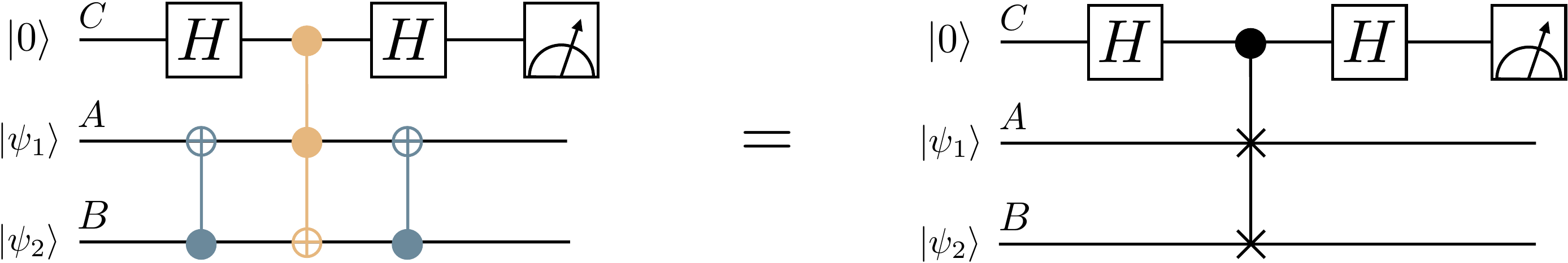}
\caption{\textbf{Single qubit SWAP test.} Given two single qubit states $\ket{\psi_1}$ and $\ket{\psi_2}$, the SWAP test is usually employed to compute the state overlap $|\langle \psi_1|\psi_2\rangle|^2$. The right circuit shows the decomposition of the parallelized SWAP gate into CNOT gates and a Toffoli gate. To simplify circuit diagrams, we will use the equivalent gate, called the Fredkin gate, as shown on the left.}
\label{fig:singleSWAP}
\end{figure}

Consider the SWAP test depicted in Sup. Fig~\ref{fig:singleSWAP}. Given two single-qubit pure states $\ket{\psi_1}$ and $\ket{\psi_2}$, one verify that the the SWAP test performs the following map 
\begin{align}
    \ket{0}\ket{\psi_1}\ket{\psi_2}\rightarrow &\frac{1}{\sqrt{2}}\left(\ket{0}+\ket{1}\right)\ket{\psi_1}\ket{\psi_2} \nonumber,\\
    \rightarrow&\frac{1}{\sqrt{2}}\left(\ket{0}\ket{\psi_1}\ket{\psi_2}+\ket{1}\ket{\psi_2}\ket{\psi_1}\right) \nonumber,\\
    \rightarrow&\frac{1}{2}\Big((\ket{0}+\ket{1})\ket{\psi_1}\ket{\psi_2}+(\ket{0}-\ket{1})\ket{\psi_2}\ket{\psi_1}\Big) \nonumber,\\
    \rightarrow&\frac{1}{2}\Big(\ket{0}(\ket{\psi_1}\ket{\psi_2}+\ket{\psi_2}\ket{\psi_1})\nonumber,\\
    &+\ket{1}(\ket{\psi_1}\ket{\psi_2}-\ket{\psi_2}\ket{\psi_1})\Big)\,.\label{eq:SWAPtest}
\end{align}
Hence, the probabilities $p(0)$ and $p(1)$ of respectively measuring the control qubit in the $\ket{0}$ and $\ket{1}$ state are given by
\begin{align}\label{eq:probability}
    p(0)=&\frac{1+|\langle\psi_1|\psi_2\rangle|^2}{2},\\
    p(1)=&\frac{1-|\langle\psi_1|\psi_2\rangle|^2}{2}\,.
\end{align}
Thus, we have $p(0)=1$ and $p(1)=0$ iff $\ket{\psi_1}=\ket{\psi_2}$ (up to a global phase), and conversely  one will measure the control qubit in the $\ket{1}$ state with maximum probability of $p(1)=1/2$ iff $\ket{\psi_1}$ and $\ket{\psi_2}$ are orthogonal.

\subsection{Two qubits}

To develop intuition for the inner workings of the  $n$-qubit SWAP test, we find it instructive to first examine the 2-qubit case. 
\begin{figure}[t]
\includegraphics[width=0.9\columnwidth]{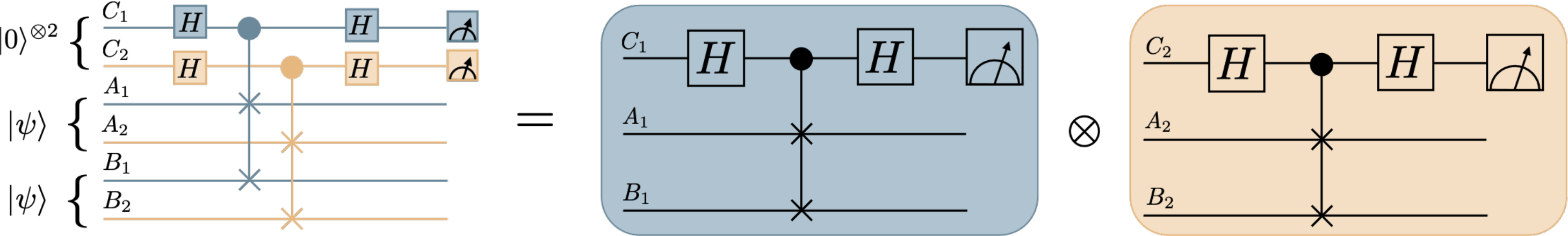}
\caption{\textbf{Two-qubit SWAP test.} The two-qubit SWAP test can be factored into the tensor product of two one-qubit SWAP tests.}
\label{fig:doubleSWAP}
\end{figure}
Let us first note that as shown in Sup. Fig.~\ref{fig:doubleSWAP}, the $i$-th control qubit is essentially performing a single qubit SWAP test between the $i$-th qubit of the test and copy states. That is, we can treat the $2$-qubit parallelized SWAP test as two, $1$-qubit parallelized SWAP tests. First, consider the separable input state $\ket{\Psi}=\ket{i_1}\ket{i_2}\ket{i'_1}\ket{i'_2}$ where we take $\{\ket{i}\}$ and $\{\ket{i'}\}$ to be computational basis vectors. Then, using Eq.~\eqref{eq:SWAPtest} we find that under the action of the two-qubit parallelized SWAP test, the total composite state

\begin{align}
    \ket{0}\ket{0}\ket{\Psi}=\ket{0}\ket{0}\ket{i_1}\ket{i_2}\ket{i'_1}\ket{i'_2}&= \Big(\ket{0}\ket{i_1}\ket{i'_1}\Big)\otimes\Big(\ket{0}\ket{i_2}\ket{i'_2}\Big)\nonumber
\,,\end{align}
becomes
\begin{align}
    \ket{\Psi_{\text{out}}}&=\frac{1}{2}\Big(\ket{0}(\ket{i_1}\ket{i'_1}+\ket{i'_1}\ket{i_1})+\ket{1}(\ket{i_1}\ket{i_1}-\ket{i'_1}\ket{i_1})\Big)\otimes\frac{1}{2}\Big(\ket{0}(\ket{i_2}\ket{i'_2}+\ket{i'_2}\ket{i_2})+\ket{1}(\ket{i_2}\ket{i'_2}-\ket{i'_2}\ket{i_2})\Big)\,.
\end{align}
Then, the probability of observing the control system in a particular state after a von Neumann measurement are given as
\begin{align}\label{eq:2qubitProductState}
\begin{split}
    p(00)&=\bra{0}\bra{0}\braket{\Psi}{\Psi_{\text{out}}} =\frac{1}{4}\left(1+\delta_{i_1,i_1'}\right)\left(1+\delta_{i_2,i_2'}\right), \quad
    p(01)=\bra{0}\bra{1}\braket{\Psi}{\Psi_{\text{out}}}=\frac{1}{4}\left(1+\delta_{i_1,i_1'}\right)\left(1-\delta_{i_2,i_2'}\right),\\
    p(10)&=\bra{1}\bra{0}\braket{\Psi}{\Psi_{\text{out}}}=\frac{1}{4}\left(1-\delta_{i_1,i_1'}\right)\left(1+\delta_{i_2,i_2'}\right), \quad
    p(11)=\bra{1}\bra{1}\braket{\Psi}{\Psi_{\text{out}}}=\frac{1}{4}\left(1-\delta_{i_1,i_1'}\right)\left(1-\delta_{i_2,i_2'}\right).
    \end{split}
\end{align}

Hence, for separable input states, if both copies of the 2-qubit input state are identical, we will measure the control to be in the $\ket{00}$ state with certainty. Now, consider a general 2-qubit state written in the computational product basis:  $\ket{\Psi}=\sum_{\vec{i}\vec{i'}} c_{\vec{i}}c_{\vec{i'}} \ket{i_1}\ket{i_2}\ket{i'_1}\ket{i'_2}$. We denote the conjugate of this state by $\bra{\Psi}=\sum_{\vec{j}\vec{j'}} c
^*_{\vec{j}}c^*_{\vec{j'}} \bra{j_1}\bra{j_2}\bra{j'_1}\bra{j'_2}$. In this notation, the resulting probability of observing the bitstring $\vec{z}=z_1 \cdot z_2$ when measuring the control register is
\begin{align}\label{eq:pz22}
   p(\vec{z}) = p(z_1 z_2) &= \sum_{\vec{i}\vec{i'}\vec{j}\vec{j'}}\frac{c_{\vec{i}} c_{\vec{i}'} c_{\vec{j}} c_{\vec{j}'}}{4} \left(\delta_{i_1,j_1}\delta_{i_1',j_1'}+(-1)^{z_1}\delta_{i_1,j_1'}\delta_{i_1',j_1}\right) \left(\delta_{i_2,j_2}\delta_{i_2',j_2'}+(-1)^{z_2}\delta_{i_2,j_2'}\delta_{i_2',j_2}\right), \\
   &=\sum_{\vec{i}\vec{i'}\vec{j}\vec{j'}}\frac{c_{\vec{i}} c_{\vec{i}'} c_{\vec{j}} c_{\vec{j}'}}{4} \prod_{k=1}^2 \left(\delta_{i_k,j_k}\delta_{i_k',j_k'}+(-1)^{z_k}\delta_{i_k,j_k'}\delta_{i_k',j_k}\right).
\end{align}
As we can see, this is actually the general expression for the outcome probabilities of the $n$-qubit parallelized SWAP test. The only difference is $k$ will range from $1$ to $n.$ Before turning to the general case, however, we should see how these quantities can be used to quantify entanglement. To this end, consider the two qubit Hilbert space $\HC=\HC_A\otimes\HC_B$, where $\HC_{A(B)}$ are two-dimensional Hilbert spaces. Then, any state $\ket{\psi}\in\HC$ has a \textit{Schmidt decomposition} given as
\begin{equation}\label{eq:schmidt2}
    \ket{\psi}=\sum_{j=1}^{r_s} \sqrt{\lambda_j} \ket{j_A}\ket{j_B}\,,
\end{equation}
where $\sqrt{\lambda_j}$ are known as the Schmidt coefficients, $r_s$ is the Schmidt rank, and where $\sum_j \lambda_j=1$. In addition, $\{\ket{j_{A(B)}}\}$ is a local basis for $\HC_{A(B)}$ called the Schmidt basis. Note that a state $\ket{\psi}\in \mathcal{H}$ is a product state if and only if it has a Schmidt rank of 1~\cite{nielsen2000quantum}. If $r_s=2$ then the state is entangled, and if $\lambda_1=\lambda_2=\frac{1}{2}$ then the state is maximally entangled. The Schmidt rank quantifies the amount of entanglement between systems $A$ and $B$ and is preserved under local unitary operations. Thus, the Schmidt rank allows us to define equivalence classes among bipartite entangled states. For example, all Bell states are equivalent under local unitary operations and thus form a unique class of maximally entangled states on two qubits \cite{bruss2019quantum}.

The, combining Eqs.~\eqref{eq:pz22} and~\eqref{eq:schmidt2} we find that the probability of measuring both control qubits in the $\ket{1}$ state is
\begin{equation}
    p(11)=\lambda_1\lambda_2=\frac{C^2}{4}\,,\label{eq:tq}
\end{equation}
where $C$ is the concurrence, a well-known measure of entanglement for $2$-qubit states \cite{wootters1998entanglement}.

\subsection{$n$-qubits} \label{sec:understanding-nqubits}
\begin{figure}[t]
\centering
\includegraphics[width=0.5\columnwidth]{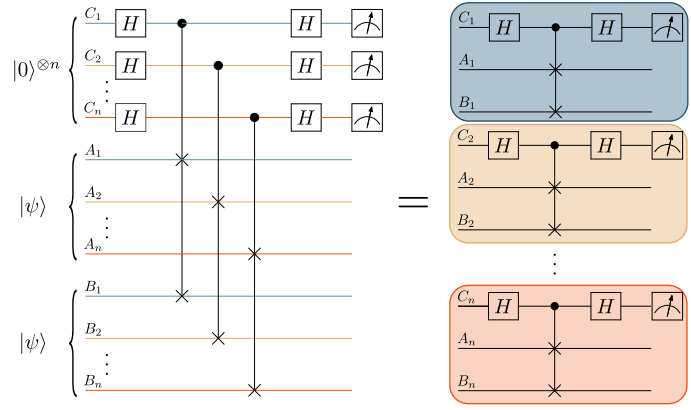}
\caption{\textbf{Parallelized $n$-qubit SWAP test.} a) Given two single qubit states $\ket{\psi_1}$ and $\ket{\psi_2}$, the SWAP test is usually employed to compute the state overlap $|\langle \psi_1|\psi_2\rangle|^2$. The addition of an arbitrary single-qubit rotation $U$ leaves the results of the SWAP test unchanged. }
\label{fig:nSWAP}
\end{figure}
Here we present an alternative derivation of Eq. \eqref{eq:pz}. Consider the three qubit product state formed as the tensor product of a test state $\ket{i}$, a copy state $\ket{i'}$ and a control qubit initialized in the zero state. We know that the controlled SWAP transformation yields
\begin{align}
    \ket{0}\ket{i}\ket{i'} \rightarrow \frac{1}{2} \left( \ket{0}(\ket{i}\ket{i'}+\ket{i'}\ket{i})+\ket{1}(\ket{i}\ket{i'}-\ket{i'}\ket{i})\right).
\end{align}
The $n$-qubit parallelized SWAP transformation can be decomposed into single-qubit SWAP tests on the $i$-th qubit of the test, copy, and control register. Thus we have
\begin{align}
    \ket{\vec{0}}\ket{\vec{i}}\ket{\vec{i'}} \rightarrow \frac{1}{2^n} \bigotimes_{k=1}^n \big( \ket{0}(\ket{i_k}\ket{i'_k}+\ket{i'_k}\ket{i_k})+\ket{1}(\ket{i_k}\ket{i'_k}-\ket{i'_k}\ket{i_k}) \big).
\end{align}
In general, the input state is of the form $\sum_{\vec{i},\vec{i'}} c_{\vec{i}} c_{\vec{i'}} \ket{\vec{i}}\ket{\vec{i'}}$ leading to 
\begin{align}
    \sum_{\vec{i},\vec{i'}} c_{\vec{i}} c_{\vec{i'}} \ket{\vec{0}}\ket{\vec{i}}\ket{\vec{i'}} \rightarrow \sum_{\vec{i},\vec{i'}} \frac{c_{\vec{i}} c_{\vec{i'}}}{2^n} \bigotimes_{k=1}^n \big( \ket{0}(\ket{i_k}\ket{i'_k}+\ket{i'_k}\ket{i_k})+\ket{1}(\ket{i_k}\ket{i'_k}-\ket{i'_k}\ket{i_k}) \big).
\end{align}
Letting $z_k \in \{0,1\}$, we can write the output state as
\begin{align}
   \ket{\Psi_{\text{out}}} &=  \sum_{\vec{i},\vec{i'}} \frac{c_{\vec{i}} c_{\vec{i'}}}{2^n} \bigotimes_{k=1}^n \sum_{\vec{z}}  \ket{z_k}\Big(\ket{i_k}\ket{i'_k}+ (-1)^{z_k}\ket{i'_k}\ket{i_k}\Big).
\end{align}
The corresponding density matrix is given as 
\begin{align}
    \rho_{\text{out}} = \sum_{\vec{i},\vec{i'},\vec{j},\vec{j'}} \frac{c_{\vec{i}} c_{\vec{i'}} c^*_{\vec{j}} c^*_{\vec{j'}}}{2^{2n}} \bigotimes_{k=1}^n \sum_{\vec{z},\vec{z'}}  \ket{z_k}\bra{z'_k}\Big(\ket{i_k}\ket{i'_k}+ (-1)^{z_k}\ket{i'_k}\ket{i_k}\Big)\Big(\bra{j_k}\bra{j'_k}+ (-1)^{z'_k}\bra{j'_k}\bra{j_k}\Big).
\end{align}
To obtain the probability of measuring bitstring $\vec{z}=z_1 \cdot z_2 \dotsm z_n$ we compute the expectation value of the operator $\ket{\vec{z}}\bra{\vec{z}}\otimes \id^{\otimes n} \otimes \id^{\otimes n}$ for the output state. We find
\begin{align}
    p(\vec{z}) &= \Tr[(\ket{\vec{z}}\bra{\vec{z}}\otimes \id^{\otimes n} \otimes \id^{\otimes n}) \rho_{\text{out}}],\\
    &= \sum_{\vec{i},\vec{i'},\vec{j},\vec{j'}} \frac{c_{\vec{i}} c_{\vec{i'}} c^*_{\vec{j}} c^*_{\vec{j'}}}{2^{2n}} \prod_{k=1}^n \Tr[\Big(\ket{i_k}\ket{i'_k}+ (-1)^{z_k}\ket{i'_k}\ket{i_k}\Big)\Big(\bra{j_k}\bra{j'_k}+ (-1)^{z_k}\bra{j'_k}\bra{j_k}\Big)],\\
     p(\vec{z}) &= \sum_{\vec{i},\vec{i'},\vec{j},\vec{j'}} \frac{c_{\vec{i}} c_{\vec{i'}} c^*_{\vec{j}} c^*_{\vec{j'}}}{2^n} \prod_{k=1}^n \left(\delta_{i_k j_k} \delta_{i'_k j'_k}+(-1)^{z_k} \delta_{i_k j'_k} \delta_{i'_k j_k} \right).
\end{align}

\end{document}